\newtheorem{theorem}{Theorem}[section]
\newtheorem{corollary}{Corollary}
\newtheorem{lemma}[theorem]{Lemma}
\newtheorem{proposition}{Proposition}
\newtheorem{example}{Example}
\newtheorem{conjecture}{Conjecture}
\theoremstyle{definition}
\newtheorem{definition}[theorem]{Definition}
\newtheorem{remark}{Remark}
\title[On the ideal associated to a linear code]
      {On the ideal associated to a linear code}
\author[I. M\'arquez-Corbella, E. Mart\'inez-Moro and E. Su\'arez-Canedo]{}
\subjclass{Primary: 94B05, 13P25; Secondary: 13P10}
 \keywords{Gr\"obner bases, Graver bases, Minimal support codewords}
 \email{irene.marquez-corbella@inria.fr}
 \email{edgar@maf.uva.es}
 \email{emilio.suarez@deic.uab.cat}
\thanks{The first two authors are partially supported by Spanish MICINN under project MTM2007-64704. The research of the first author is also supported by the FSMP postdoctoral program. The second author is also supported under project MTM2010-21580-C02-02 by Spanish MINCINN}
\begin{document}
\maketitle

\centerline{\scshape Irene M\'arquez-Corbella }
\medskip
{\footnotesize
 \centerline{INRIA Paris-Rocquencourt, SECRET Project-Team}
   \centerline{78153 Le Chesnay Cedex, France.}
} 

\medskip

\centerline{\scshape Edgar Mart\'inez-Moro}
\medskip
{\footnotesize
 \centerline{Mathematics Institute (IMUVa), University of Valladolid, Castilla, Spain.}   
 \centerline{Vernon Wilson Chair, Eastern Kentucky University.}
}

\medskip

\centerline{\scshape Emilio Su\'arez-Canedo}
\medskip
{\footnotesize
 \centerline{Departament d'Enginyeria de la Informaci\'o i de les Comunicacions.}
\centerline{Universitat Aut\`onoma de Barcelona (UAB)}
}

\bigskip

 \centerline{(Communicated by the associate editor name)}

\begin{abstract}
This article aims to explore the bridge between the algebraic structure of a linear code and the complete decoding process. To this end, we associate a specific binomial ideal $I_+(\mathcal C)$ to an arbitrary linear code.
The binomials involved in the reduced Gr\"obner basis of such an ideal relative to a degree-compatible ordering induce a uniquely defined test-set for the code, and this allows the description of a Hamming metric decoding procedure. 
Moreover, the binomials involved in the Graver basis of $I_+(\mathcal C)$ provide a universal test-set which turns out to be a set containing the set of codewords of minimal support of the code.
\end{abstract}

\section{Introduction}

In this paper, we associate a binomial ideal $I_+(\mathcal C)$ to an arbitrary linear code $\mathcal C$ over any finite field $\mathbb F_q$. 
Several papers have been already devoted to the idea of associating the structure of a polynomial ideal to a linear code and thus, relate the reduction process on the first structure to the challenge of complete decoding on the second one. See \cite{borges:2008,marquez:2011} and the references therein. Unfortunately, so far, this approach has not yet been applied succesfully to the non-binary case. 
Recently, some of these techniques were also studied by Aliasgari et al. \cite{aliasgari:2013} for non-binary group block codes, but the developed decoding algorithm was for the $G$-norm and not for the Hamming metric, recall that the $G$-norm is equivalent to the Hamming distance for $q = 2, ~3$.

Therefore, the main achievement of this article has been to find the right structure that allows us to perform a complete decoding method as a reduction procedure for monomials in a polynomial ring.
The decoding procedure presented here is a complete decoding algorithm that is, the procedure always provides the closest codeword to the received vector. Indeed we are ensured that it will retrieve the original sent codeword if the number of errors is smaller or equal to the error-correcting capability of the code.

First, in Section \ref{Section1} we prove that $I_+(\mathcal C)$ is finitely generated and the generators are provided by a basis of $\mathcal C$ and the binomials attached to the additive table of the base field $\mathbb F_q$. Or equivalently, $I_+(\mathcal C)$ is generated by the binomials given by the $\mathbb F_q$-kernel of an explicit matrix.
Note that this approach is a non-trivial extension of that of \cite{ikegami:2003} to solve linear integer programming problems with modulo arithmetic conditions, that is, related with matrices over any ring of integers $\mathbb Z_s$. 

Then, in Section \ref{Section2}, we show that a reduced Gr\"obner basis $\mathcal G$ of $I_+(\mathcal C)$ relative to a degree-compatible ordering give us a complete decoding algorithm.
The proposed procedure has some resemblance with the two gradient descent decoding algorithms known for binary codes \cite{liebler:2009,Ashikhmin:1998}, note that both algorithms were unified in \cite{borges:2011}. In our method, the test-set of $\mathcal C$ is replaced by $\mathcal G$ and addition is substituted for the reduction induced by $\mathcal G$. However, the idea behind our algorithm can be stated without the use of Gr\"obner basis theory as a \emph{step by step decoding} \cite{prange:1961} algorithm, which is a classical but very useful technique.

Next, in Section \ref{Section3} we discuss an alternative for the computation of $\mathcal G$. A brief description of this technique as well as a complexity estimation can be found here. We can not expect that the algorithm runs in polynomial time since the complete decoding algorithm for general linear codes is an NP-hard problem \cite{berlekamp:1978}, even if preprocessing is allowed (see \cite{bruck:1990}). However, the proposed algorithm is better suited for our case than the standard Buchberger's algorithm.

In Section \ref{Section4}, we consider the Graver basis associated to $I_+(\mathcal C)$ which turns out to contain the set of codewords of minimal support of $\mathcal C$. The interest of this set is due to its relationship with the complete decoding problem and its applications in cryptography. 

Finally, in Section \ref{Section5} we apply the above approach to other classes of codes such as modular codes, codes defined over multiple alphabets or additive codes. The set of codewords of minimal support for modular codes has already been discussed in \cite{marquez:2011,marquez:2012} and in \cite{aliasgari:2013}, where similar ideas are treated for a metric different from the Hamming.

\subsection{Preliminaries}
We begin with an introduction of basic definitions and some known results from coding theory over finite fields. 
By $\mathbb K$, $\mathbb Z$, $\mathbb Z_s$, $\mathbb F_q$ and $\mathbb F_q^*$, where $q$ is a prime power, we denote an arbitrary finite field, the ring of integers, the ring of integers modulo $s$, any representation of a finite field with $q$ elements, and the multiplicative group of nonzero elements of $\mathbb F_q$, respectively.  
For every finite field $\mathbb F_q$ the multiplicative group $\mathbb F_q^*$ of nonzero elements of $\mathbb F_q$ is cyclic. A generator of the cyclic group $\mathbb F_q^*$ is called a primitive element of $\mathbb F_q$. Therefore, $\mathbb F_q$ consists of $0$ and all powers from $1$ to $q-1$
 of that primitive element (see for instance \cite{samuel:2008}).

An $[n,k]$ linear code $\mathcal C$ over $\mathbb F_q$ is a $k$-dimensional subspace of $\mathbb F_q^n$. We define a \emph{generator matrix} of $\mathcal C$ to be a $k\times n$ matrix $G$ whose row vectors span $\mathcal C$, while a \emph{parity check matrix} of $\mathcal C$ is an $(n-k)\times n$ matrix $H$ whose null space is $\mathcal C$. We denote by 
$d_H(\cdot, \cdot)$ and $\mathrm{w}_H(\cdot)$ the \emph{Hamming distance} and the \emph{Hamming weight} on $\mathbb F_q^n$, respectively. We write $d$ for the minimum distance of a linear code $\mathcal C$ and this is equal to its minimum weight. 
This parameter determines the error-correction  capability  of $\mathcal C$ which is given by $t=\left\lfloor \frac{d-1}{2}\right\rfloor$, where $\lfloor x\rfloor$ is the largest integer at most $x$. 

\begin{remark}
Let $t$ be the error-correction capability  of an $[n,k,d]$ code $\mathcal C$ over $\mathbb F_q$. Then, $d=2t+1$ if $d$ is odd and $d=2t+2$ if $d$ is even.
\end{remark}

For a word $\mathbf x\in \mathbb F_q^n$, its support, denoted by $\mathrm{supp}(\mathbf x)$, is defined as the set of nonzero coordinate positions, i.e., $\mathrm{supp}(\mathbf x) = \left\{ i\mid x_i \neq 0\right\}$.

The \emph{Voronoi region} of a codeword $\mathbf c \in \mathcal C$, denoted by $\mathrm D(\mathbf c)$, is defined as:
$$\mathrm D(\mathbf c) = \left\{ \mathbf y \in \mathbb F_q^n \mid 
d_H(\mathbf y , \mathbf c) \leq d_H(\mathbf y, \mathbf c') \hbox{ for all }\mathbf c'\in \mathcal C \setminus \{ \mathbf c\}\right\}.$$
The union of all Voronoi regions of $\mathcal C$ is equal to $\mathbb F_q^n$.
However, some points of $\mathbb F_q^n$ may be contained in several regions. 
Moreover, note that the Voronoi region of the all-zero codeword $\mathrm D(\mathbf 0 )$ 
coincides with the set of coset leaders of $\mathcal C$.

A \emph{test-set} $\mathcal T_{\mathcal C}$ for $\mathcal C$ is a set of codewords such that for every word $\mathbf y\in \mathbb F_q^n$, either $\mathbf y$ lies in the Voronoi region $\mathrm D(\mathbf 0)$, or there exists an element $\mathbf t \in \mathcal T_{\mathcal C}$ such that $\mathrm{w}_H(\mathbf y - \mathbf t) < \mathrm w_H(\mathbf y)$.

Recall that the general principle of Gradient Descend Decoding algorithms (GDDA) is to use a certain set of codewords $\mathcal T_{\mathcal C}$ (namely test-set, formally described above) which has been precomputed and stored in memory in advance. Then the algorithm can be accomplished by recursively inspecting the test-set for the existence of an adequate element which is subtracted from the current vector. The following algorithm describes a gradient-like decoding algorithm for binary codes, this algorithm (for the binary case) appears in \cite{barg:1998}.

The following version of the GDD algorithm allows to reduce the size of the test-set for the q-ary case since once a vector is stored we can omit its multiples.

\begin{algorithm2e}[!h]
\KwData{The received word $\mathbf y\in \mathbb F_q^n$}
\KwResult{A codeword $\mathbf c\in \mathcal C$ that minimized the Hamming distance $d_H(\mathbf c, \mathbf y)$} 

Set $\mathbf c = \mathbf 0$;
\While{$\mathbf y \notin D(\mathbf 0)$}
{
 Look for $\mathbf z\in \mathcal T_{\mathcal C}$ such that $\mathrm w_H(\mathbf y - \lambda \mathbf z)< \mathrm w_H(\mathbf y)$ with $\lambda \in \mathbb F_q$\;
$\mathbf y \longleftarrow \mathbf y - \mathbf z$\;
$\mathbf c \longleftarrow \mathbf c + \mathbf z$
}
Return $\mathbf c = \mathbf y$

\caption{Gradient-like decoding}
\label{Algorithm1}
\end{algorithm2e}

In order to achieve complete decoding over a linear code $\mathcal C$ the aim of this article is to use a Gradient-like decoding method with the minimal test-set provided by a reduced Gr\"obner basis $\mathcal G$ of the ideal associated to the code $I_+(\mathcal C)$ with respect to a degree compatible ordering. As we will see, we do not need to store all the binomials of such Gr\"obner basis but the codewords associated to the so-called \emph{minimal test-set}.

A non-zero codeword $\mathbf m$ in $\mathcal C$ is said to be a \emph{minimal support codeword} if there are no other codewords $\mathbf c\in \mathcal C$ such that 
$\mathrm{supp}(\mathbf c)\subset\mathrm{supp}(\mathbf m)$.
We denote by 
$\mathcal M_{\mathcal C}$ the set of codewords of minimal support of $\mathcal C$.

\section{The ideal associated to a linear code}
\label{Section1}

In this section we associate a binomial ideal to an arbitrary linear code provided by the rows of a generator matrix and the relations given by the additive table of the defining field.

Let $\mathbf X$ denote $n$ vector variables $X_1, \ldots, X_n$ such that each variable $X_i$ can be decomposed into $q-1$ components $x_{i,1}, \ldots, x_{i,q-1}$ with $i=1, \ldots, n$. A monomial in $\mathbf X$ is a product of the form:

$$\mathbf X^{\mathbf u} = X_1^{\mathbf u_1} \cdots X_n^{\mathbf u_n} = 
\underbrace{\left(x_{1,1}^{u_{1,1}}\cdots x_{1,q-1}^{u_{1,q-1}}\right)}_{X_1^{\mathbf u_1}} \cdots
\underbrace{\left(x_{n,1}^{u_{n,1}}\cdots x_{n,q-1}^{u_{n,q-1}}\right)}_{X_n^{\mathbf u_n}},$$
where $\mathbf u\in \mathbb Z_{\geq 0}^{n(q-1)}$. The total degree of $\mathbf X^{\mathbf u}$ is the sum $\deg(\mathbf X^{\mathbf u}) = \sum_{i=1}^n\sum_{j=1}^{q-1}u_{i,j}$.
When $\mathbf u = \left( 0, \ldots, 0\right)$, note that $\mathbf X^{\mathbf u} = 1$. 
Then the polynomial ring $\mathbb K[\mathbf X]$ is the set of all polynomials in $\mathbf X$ with coefficients in $\mathbb K$.

Let $\alpha$ be a primitive element of $\mathbb F_q$.
We define by $\mathcal{R}_{X_i}$ the set of all the binomials on the variable $X_i$ associated to the relations given by the additive table of the field 
$\mathbb F_q = \left\langle \alpha^j \mid j=1, \ldots, q-1 \right\rangle \cup \{ 0\}$, i.e., 
$$\mathcal{R}_{X_i} = 
\left\{\begin{array}{ccc}
\left\{ x_{i,u}x_{i,v}-x_{i,w} \mid \alpha^u + \alpha^v = \alpha^w  \right\}
& \bigcup &
\left\{ x_{i,u}x_{i,v}-1 \mid \alpha^u + \alpha^v = 0 \right\}
\end{array}
\right\},$$
with $i=1, \ldots, n.$ There are $\binom{q}{2}$ different binomials in $\mathcal R_{X_i}$.

We define $\mathcal R_{\mathbf X}$ as the following binomial ideal in $\mathbb K[\mathbf X]$:
$\mathcal R_{\mathbf X} = \left\langle  \cup_{i=1}^n \mathcal R_{X_i}\right\rangle$.

We will use the following characteristic crossing functions. 
These applications aim at describing a one-to-one correspondence between the finite field $\mathbb F_q$ with $q$ elements and the standard basis of $\mathbb Z^{q-1}$, denoted as $E_q = \left\{ \mathbf e_1, \ldots, \mathbf e_{q-1}\right\}$ where $\mathbf e_i$ is the unit vector with a $1$ in the $i$-th coordinate and $0$'s elsewhere.
$$\begin{array}{ccc}
\begin{array}{cccc}\Delta: & \mathbb F_q & \longrightarrow & E_q \cup \{ \mathbf 0\}\subseteq \mathbb Z^{q-1}\end{array} 
& \hbox{ and }&
\begin{array}{cccc}\nabla: & E_q \cup \{\mathbf 0\} & \longrightarrow & \mathbb F_q\end{array}
\end{array}$$

\begin{enumerate}
\item The map $\Delta$ replaces 
the element $\mathbf a = \alpha^i \in \mathbb F_q$ by the vector $\mathbf e_i$ and $0\in \mathbb F_q$ by the zero vector $\mathbf 0 \in \mathbb Z^{q-1}$.

\item The map $\nabla$ recovers the element $\alpha^j\in \mathbb F_q$ from the unit vector $\mathbf e_j$ and the zero element $0\in \mathbb F_q$ from the zero vector $\mathbf 0\in \mathbb Z^{q-1}$.
\end{enumerate}

These maps will be used with matrices and vectors acting coordinate-wise.
Although $\Delta$ is not a linear function. Note that we have

$$\mathbf X^{\Delta \mathbf a} \cdot \mathbf X^{\Delta \mathbf b} = \mathbf X^{\Delta \mathbf a + \Delta \mathbf b} = \mathbf X^{\Delta \left(\mathbf a + \mathbf b\right)} \mod \mathcal R_{\mathbf X} \hbox{ for all }\mathbf a, \mathbf b\in \mathbb F_q^n.$$

That is, the characteristic crossing functions induce the following maps:

$$
\begin{array}{c}
\begin{array}{cccccc}
\tilde{\Delta}: & \mathbb F_q^n & \xrightarrow{\Delta} &\left( E_q \cup \left\{ \mathbf 0\right\}\right)^n &
\longrightarrow & \mathbb K[\mathbf X]/ \mathcal R_{\mathbf X} \\
& \mathbf a & \longmapsto & \Delta \mathbf a & \longmapsto & \mathbf X^{\Delta \mathbf a}
\end{array}
\\ \\ \hbox{ and }\\ \\
\begin{array}{cccccc}
\tilde{\nabla}: & \mathbb K[\mathbf X] / \mathcal R_{\mathbf X} & \longrightarrow &\left( E_q \cup \left\{ \mathbf 0\right\}\right)^n &
\xrightarrow{\nabla} & \mathbb K[\mathbf X] \\
& \mathbf X^{\mathbf u} & \longmapsto & \mathbf u & \longmapsto & \nabla \mathbf u
\end{array}
\end{array}
$$


\begin{remark}
Take into account that $\mathbb F_q$ contains $\phi\left(q-1\right)$ primitive elements, where $\phi$ is the Euler function 
(or equivalently, the number of integers less than and relative prime to $q-1$). Every primitive element of $\mathbb F_q$ can serve as a defining element of the characteristic crossing functions. But they will lead to different permutations of the components of the vector variable $X_i$.
\end{remark}

\begin{definition}
\label{Definition2}
The monomial $\mathbf X^{\mathbf a}$ is said to be in \emph{standard form} if the exponents of each variable $x_{i,j}$ is $0$ or $1$, and two variables $x_{i,j}$ and $x_{i,l}$ do not appear in the same monomial.
Therefore, a monomial is in standard form if it can be written as $\prod_{i=1}^n x_{i,j_i}$.
Note that, any monomial modulo the additive relations $\left\{ \mathcal R_{X_i}\right\}_{i=1, \ldots, n}$ is in standard form. Or equivalently, $\mathbf X^{\mathbf a}$ is in standard form if and only if there exists $\mathbf b\in \mathbb F_q^n$ such that $\mathbf X^{\mathbf a} = \mathbf X^{\Delta \mathbf b}$.

Moreover, if we multiply standard monomials with disjoint support then, $\Delta$ provides linearity in $\mathbb K[\mathbf X]$, i.e. :
$$\mathbf X^{\Delta \mathbf a} \mathbf X^{\Delta \mathbf b} = \mathbf X^{\Delta \left( \mathbf a + \mathbf b\right)} \hbox{ if } \mathrm{supp}(\mathbf a) \cap \mathrm{supp}(\mathbf b) = \emptyset$$

A polynomial $f\in \mathbb K[\mathbf X]$ is said to be in \emph{standard form} if each monomial in its decomposition is in standard form.
\end{definition}

\begin{remark}
The following property is crucial for the results achieved in this article:
$$\hbox{If } \mathbf X^{\mathbf a} \hbox{ is in standard form, then } \deg(\mathbf X^{\mathbf a}) = \mathrm w_H(\nabla \mathbf a).$$
\end{remark}

Unless otherwise stated, we simply write $\mathcal C$ for an $[n,k]$ linear code defined over the finite field $\mathbb F_q$. We define the \emph{ideal associated} to $\mathcal C$ as the binomial ideal:
\begin{equation*}
I(\mathcal C) = \left\langle\left\{ \mathbf X^{\Delta \mathbf a} - \mathbf X^{\Delta \mathbf b} \mid 
\mathbf a-\mathbf b \in \mathcal C
\right\}\right\rangle \subseteq \mathbb K[\mathbf X].
\end{equation*}
For a fuller discussion of this algebraic structure see \cite{borges:2011,borges:2008,marquez:2011} and the references therein.

Given the rows of a generator matrix of $\mathcal C$, labelled by $\left\{\mathbf w_1, \ldots, \mathbf w_k\right\} \subseteq \mathbb F_q^n$, we define the following ideal:

\begin{eqnarray*}
I_+(\mathcal C) 
& = & \left\langle \begin{array}{ccc}
\left\{\mathbf X^{
\Delta (\alpha^j \mathbf w_i)}-1\right\}_{\substack{i=1, \ldots, k\\ j=1, \ldots, q-1}}
&\bigcup&
\left\{ \mathcal{R}_{X_i} \right\}_{i=1, \ldots, n} 
\end{array}
\right\rangle \subseteq \mathbb K[\mathbf X].
\end{eqnarray*}

\begin{remark}
Note that we encode all the information of our ideal in the exponents, thus we can always take $\mathbb K=\mathbb F_2$.
\end{remark}

\begin{lemma}
\label{New:Lemma}
Let $f(\mathbf X) \in \mathbb K[\mathbf X]$. Then,

$$\begin{array}{ccc}
f(\mathbf X) \in \mathcal R_{\mathbf X} & \hbox{ if and only if } & 
\begin{array}{c}
f(\mathbf X) = \sum_{i\in I} \left( \mathbf X^{\Delta \mathbf a_i}\mathbf X^{\Delta \mathbf b_i} - \mathbf X^{\Delta \mathbf c_i}\right)\\
\hbox{ with }
\mathbf a_i + \mathbf b_i - \mathbf c_i = \mathbf 0 \hbox{ in } \mathbb F_q^n\hbox{ , }~~ \forall i \in I
\end{array}
\end{array}$$
\end{lemma}

\begin{proof}
Let $f(\mathbf X)\in \mathcal R_{\mathbf X}$. Thus, $f(\mathbf X)$ can be written as a finite linear combination of elements in the set of generators of $\mathcal R_{\mathbf X}$ with coefficients in $\mathbb K= \mathbb F_2$, i.e. 
$$f(\mathbf X) = \sum_{j=1}^n \sum_{l\in L_j} r_{jl}$$
where $\left\{r_{jl} \mid l\in L_j\right\}$ is a subset of generators of $\mathcal R_{X_j}$ for all $j=1, \ldots, n$.
Following the definition of $\mathcal R_{X_j}$, the binomials $r_{jl}$ can take two different forms:
\begin{enumerate}
\item $r_{jl} = 
x_{ju}x_{jv} - x_{jw}$
with $\alpha^u + \alpha^v = \alpha^w \hbox{ in } \mathbb F_q$.
Or equivalently,
$$r_{jl} = X_j^{\Delta \alpha^u}X_j^{\Delta \alpha^v} - X_j^{\Delta \alpha^w} = \mathbf X^{\Delta \alpha^u \mathbf e_j}  \mathbf X^{\Delta \alpha^v \mathbf e_j} -  \mathbf X^{\Delta \alpha^w \mathbf e_j}$$
\item $r_{jl} = 
x_{ju}x_{jv} - 1$
with $\alpha^u + \alpha^v = 0 \hbox{ in } \mathbb F_q$.
Or equivalently,
$$r_{jl} = X_j^{\Delta \alpha^u}X_j^{\Delta \alpha^v} - 1 = \mathbf X^{\Delta \alpha^u \mathbf e_j}  \mathbf X^{\Delta \alpha^v \mathbf e_j} -  1$$
\end{enumerate}

Hence, $f(\mathbf X) = \sum_{i \in I}\mathbf X^{\Delta \mathbf a_i}\mathbf X^{\Delta \mathbf b_i}- \mathbf X^{\Delta \mathbf c_i}$ with $\mathbf a_i + \mathbf b_i - \mathbf c_i  = \alpha^u \mathbf e_i + \alpha^v \mathbf e_i - \alpha^w \mathbf e_i = \mathbf 0$ in $\mathbb F_q^n$ for certain indices $u,v,w$, where $\left\{ \mathbf e_1, \ldots, \mathbf e_n\right\}$ denotes the standard basis of $\mathbb F_q^n$.

To show the converse it suffices to show that each binomial $\mathbf X^{\Delta \mathbf a_j}\mathbf X^{\Delta \mathbf b_j} - \mathbf X^{\Delta \mathbf c_j}$ in the decomposition of
$f(\mathbf X)$ belongs to $\mathcal R_{\mathbf X}$ with 
$$\mathbf a_j + \mathbf b_j  - \mathbf c_j= \left( a_{j,1}, \ldots, a_{j,n}\right) +  \left(b_{j,1}, \ldots, b_{j,n}\right) -  \left(c_{j,1}, \ldots, c_{j,n}\right) = \mathbf 0 \hbox{ in }\mathbb F_q^n \hbox{ for all }j\in I$$
We have that:
{\small
\begin{eqnarray*}
\mathbf X^{\Delta \mathbf a_j}\mathbf X^{\Delta \mathbf b_j} - \mathbf X^{\Delta \mathbf c_j} & = & 
\prod_{i=1}^n X_i^{\Delta a_{j,i}}X_i^{\Delta b_{j,i}} - \prod_{i=1}^n X_i^{\Delta c_{j,i}} 
\end{eqnarray*}
\begin{eqnarray*}
&=&  \underbrace{\left( X_1^{\Delta a_{j,1}}X_1^{\Delta b_{j,1}} - X_1^{\Delta c_{j,1}}\right)}_{\mathcal R_{X_1}} \prod_{i=2}^n X_i^{\Delta a_{j,i}}X_i^{\Delta b_{j,i}} +
X_1^{\Delta c_{j,1}} \left( \prod_{i=2}^n X_i^{\Delta a_{j,i}} X_i^{\Delta b_{j,i}} - \prod_{i=2}^n X_i^{\Delta c_{j,i}} \right)\\
\end{eqnarray*}
\begin{eqnarray*}
&= & \cdots =   \underbrace{\left( X_1^{\Delta a_{j,1}}X_1^{\Delta b_{j,1}} - X_1^{\Delta c_{j,1}}\right)}_{\mathcal R_{X_1}} \prod_{i=2}^n X_i^{\Delta a_{j,i}}X_i^{\Delta b_{j,i}} \\
\end{eqnarray*}
\begin{eqnarray*}
&  & + ~~~ \underbrace{\left( X_2^{\Delta a_{j,2}}X_2^{\Delta b_{j,2}} - X_1^{\Delta c_{j,2}}\right)}_{\mathcal R_{X_2}}
X_1^{\Delta c_{j,1}}\prod_{i=3}^n X_i^{\Delta a_{j,i}} X_i^{\Delta b_{j,i}}\\
\end{eqnarray*}
\begin{eqnarray*}
& & + \ldots ~~ +
\underbrace{\left( X_n^{\Delta a_{j,n}} X_n^{\Delta b_{j,n}}- X_n^{\Delta c_{j,n}}\right)}_{\mathcal R_{X_n}}
\prod_{i=1}^{n-1} X_i^{\Delta c_{j,i}} 
\end{eqnarray*}
}

Thus, $\mathbf X^{\Delta \mathbf a_j} \mathbf X^{\Delta \mathbf b_j}- \mathbf X^{\Delta \mathbf c_j} \in \mathcal R_{\mathbf X}$ for all $j\in I$.
\end{proof}

\begin{theorem}
\label{Theorem1}
$I(\mathcal C) = I_+(\mathcal C)$.
\end{theorem}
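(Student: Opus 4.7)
The plan is to prove the equality by two inclusions. The direction $I_2 \subseteq I(\mathcal C)$ is carried out by inspecting the finite generating set of $I_2$: each binomial $\mathbf x^{\blacktriangle(\alpha^j\mathbf w_i)}-1$ arises as the defining relation for the pair $(\alpha^j\mathbf w_i,\mathbf 0)$, whose difference $\alpha^j\mathbf w_i$ lies in $\mathcal C$; each binomial in $\mathcal R(T_+^{(i)})$ corresponds to two exponent multi-indices that encode the same field element in position $i$ (for instance $\mathbf e_{iu}+\mathbf e_{iv}$ and $\mathbf e_{iw}$ when $\alpha^u+\alpha^v=\alpha^w$), hence also yields a difference lying in $\mathcal C$.

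For the harder inclusion $I(\mathcal C)\subseteq I_2$ the engine is the following bookkeeping lemma. Setting $\mathcal R:=\langle\bigcup_i\mathcal R(T_+^{(i)})\rangle\subseteq I_2$, modulo $\mathcal R$ monomial multiplication mirrors field addition, i.e.
\[
\mathbf x^{\blacktriangle\mathbf c}\cdot\mathbf x^{\blacktriangle\mathbf c'}\equiv\mathbf x^{\blacktriangle(\mathbf c+\mathbf c')}\pmod{\mathcal R}\qquad\text{for all }\mathbf c,\mathbf c'\in\mathbb F_q^n.
\]
I would verify this coordinate by coordinate: in block $i$, if $c_i=\alpha^u$ and $c_i'=\alpha^v$, the two sides in that block differ by exactly one of the binomials $x_{iu}x_{iv}-x_{iw}$ or $x_{iu}x_{iv}-1$ listed in $\mathcal R(T_+^{(i)})$, while the degenerate cases where $c_i=0$ or $c_i'=0$ are trivial. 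The blockwise statements are assembled into the global one via the telescoping identity $\prod_i A_i-\prod_i B_i=\sum_i\bigl(\prod_{j<i}B_j\bigr)(A_i-B_i)\bigl(\prod_{j>i}A_j\bigr)$ and the fact that $\mathcal R$ is an ideal.

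With the lemma in place, I would induct on the number of nonzero coefficients in an $\mathbb F_q$-expansion $\mathbf c=\sum_{i=1}^k\lambda_i\mathbf w_i$ to show that $\mathbf x^{\blacktriangle\mathbf c}-1\in I_2$ for every $\mathbf c\in\mathcal C$. The base case $\mathbf c=\alpha^j\mathbf w_i$ is literally a generator of $I_2$; the inductive step follows from
\[
\mathbf x^{\blacktriangle(\mathbf c_1+\mathbf c_2)}-1\equiv(\mathbf x^{\blacktriangle\mathbf c_1}-1)+\mathbf x^{\blacktriangle\mathbf c_1}(\mathbf x^{\blacktriangle\mathbf c_2}-1)\pmod{\mathcal R}
\]
combined with the induction hypothesis applied to $\mathbf c_1,\mathbf c_2\in\mathcal C$. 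Finally, for an arbitrary defining binomial $\mathbf x^{\blacktriangle\mathbf a}-\mathbf x^{\blacktriangle\mathbf b}$ with $\mathbf a-\mathbf b\in\mathcal C$, the lemma yields
\[
\mathbf x^{\blacktriangle\mathbf a}-\mathbf x^{\blacktriangle\mathbf b}\equiv\mathbf x^{\blacktriangle\mathbf b}\bigl(\mathbf x^{\blacktriangle(\mathbf a-\mathbf b)}-1\bigr)\pmod{\mathcal R\subseteq I_2},
\]
and the right-hand side lies in $I_2$ by the inductive claim. The main obstacle is the careful case analysis in the bookkeeping lemma (handling zero coordinates and, in characteristic $2$, the cancellation $u=v$ that invokes the relation $x_{iu}^2-1$), but no Gr\"obner-basis computation is needed: all the content lies in recognizing that $I_2$ simultaneously implements the additive structure of $\mathbb F_q^n$ (through $\mathcal R$) and the subspace $\mathcal C$ (through the scalar multiples of the rows of $G$).
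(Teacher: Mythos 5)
Your proof follows the same two--inclusion strategy as the paper and uses essentially the same telescoping identity, but it is in fact \emph{more} rigorous, because it isolates and proves the bookkeeping lemma
\[
\mathbf x^{\blacktriangle\mathbf c}\cdot\mathbf x^{\blacktriangle\mathbf c'}\equiv\mathbf x^{\blacktriangle(\mathbf c+\mathbf c')}\pmod{\mathcal R}
\]
that the paper uses silently. The paper writes
\[
\mathbf x^{\blacktriangle(\mathbf a-\mathbf b)}-1 = \bigl(\mathbf x^{\blacktriangle(\lambda_1\mathbf w_1)}-1\bigr)\prod_{i=2}^k\mathbf x^{\blacktriangle(\lambda_i\mathbf w_i)}+\Bigl(\prod_{i=2}^k\mathbf x^{\blacktriangle(\lambda_i\mathbf w_i)}-1\Bigr)
\]
as though it were a polynomial identity, but the right-hand side telescopes to $\prod_{i=1}^k\mathbf x^{\blacktriangle(\lambda_i\mathbf w_i)}-1$, which is a \emph{different} monomial from $\mathbf x^{\blacktriangle(\sum_i\lambda_i\mathbf w_i)}-1$: the map $\blacktriangle$ is not additive, and these two expressions differ exactly by an element of $\mathcal R$. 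The paper then concludes (incorrectly, as literally stated) that $\mathbf x^{\blacktriangle(\mathbf a-\mathbf b)}-1$ lies in the ideal generated by the generator-row binomials $\mathbf x^{\blacktriangle(\alpha^j\mathbf w_i)}-1$ alone, when in fact one must also throw in the $\mathcal R(T_+^{(i)})$ relations; the final line $\mathbf x^{\blacktriangle\mathbf a}-\mathbf x^{\blacktriangle\mathbf b}=(\mathbf x^{\blacktriangle(\mathbf a-\mathbf b)}-1)\mathbf x^{\blacktriangle\mathbf b}$ has the same issue. Your version keeps all these steps as congruences modulo $\mathcal R\subseteq I_2$, so every containment lands safely inside $I_2$ rather than inside a smaller ideal, and the induction on the number of nonzero $\lambda_i$ is a clean reorganization of the paper's telescoping sum. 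In short: same decomposition, same identity, but you have made explicit the one structural fact (that $\mathcal R$ makes monomial multiplication track field addition) that the paper's proof both relies on and obscures, and in doing so you have repaired a genuine gap in the published argument.
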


\begin{proof}
It is clear that $I_+(\mathcal C) \subseteq I(\mathcal C)$ since all binomials in the generating set of $I_+(\mathcal C)$ belong to $I(\mathcal C)$. Indeed:
\begin{itemize}
\item $\mathbf X^{\Delta\alpha^j \mathbf w_i}-1 \in I(\mathcal C)$ since $\alpha^j \mathbf w_i \in \mathcal C$ for all $i=1, \ldots, k$ and $j=1, \ldots, q-1$.
\item The set of binomials of $\mathcal R_{X_i}$ are elements of $I(\mathcal C)$ for all $i=1, \ldots, n$ since each binomial represents the zero codeword by Lemma \ref{New:Lemma}.
\end{itemize}

To show the converse it suffices to show that each binomial 
$\mathbf X^{\Delta\mathbf a} - \mathbf X^{\Delta\mathbf b}$ of $I(\mathcal C)$ belongs to $I_+(\mathcal C)$. 
By the definition of $I(\mathcal C)$ we have that 
$\mathbf a- \mathbf b \in \mathcal C$. Hence
$$\mathbf a - \mathbf b = 
\lambda_1 \mathbf  w_1 + \cdots + \lambda_k \mathbf w_k \hbox{ with }\lambda_1, \ldots, \lambda_k \in \mathbb F_q.$$
Note that, if the binomials $\mathbf z_1 -1$ and $\mathbf z_2-1$ belong to the ideal $I_+(\mathcal C)$ then $\mathbf z_1 \mathbf z_2 -1 = (\mathbf z_1 -1)\mathbf z_2 + (\mathbf z_2 -1)$ also belongs to $I_+(\mathcal C)$.
On account of the previous line, we have: 
\begin{eqnarray*}
\mathbf X^{\Delta (\mathbf a - \mathbf b)}- 1 & = &
\left( \mathbf X^{\Delta\lambda_1 \mathbf w_1}-1\right) \prod_{i=2}^k 
\mathbf X^{\Delta\lambda_i \mathbf w_i} +
\left( \prod_{i=2}^k \mathbf X^{\Delta\lambda_i \mathbf w_i}-1\right) \mod \mathcal R_{\mathbf X}\\
&= & \left( \mathbf X^{\Delta\lambda_1 \mathbf w_1}-1\right) \prod_{i=2}^k \mathbf X^{\Delta\lambda_i \mathbf w_i} + 
\left(\mathbf X^{\Delta\lambda_2 \mathbf w_2}-1\right)\prod_{i=3}^k 
\mathbf X^{\Delta\lambda_i \mathbf w_i}+ \cdots +\\
& + &\left( \mathbf X^{\Delta\lambda_{k-1}\mathbf w_{k-1}}-1\right)
\mathbf X^{\Delta\lambda_k \mathbf w_k}+
\left(\mathbf X^{\Delta\lambda_k \mathbf w_k} -1\right) \mod \mathcal R_{\mathbf X}.
\end{eqnarray*}
The last equation forces that 
$$ \mathbf X^{\Delta (\mathbf a - \mathbf b)}-1 \in 
\left\langle \begin{array}{ccc}
\left\{\mathbf X^{\Delta\alpha^j \mathbf w_i}-1\right\}_{\substack{i=1, \ldots, k\\ j=1, \ldots, q-1}}
 &\cup& \mathcal R_{\mathbf X}\end{array}
\right\rangle.$$ 
We have actually proved that
$\mathbf X^{\Delta \mathbf a} - \mathbf X^{\Delta \mathbf b} \in I_+(\mathcal C)$ since
$$\mathbf X^{\Delta \mathbf a} - \mathbf X^{\Delta \mathbf b} = 
\left(\mathbf X^{\Delta \left(\mathbf a - \mathbf b\right)}-1\right)\mathbf X^{\Delta\mathbf b}  \mod \mathcal R_{\mathbf X},$$
which completes the proof.
\end{proof}

\begin{example}
\label{Example1}
Let us consider the $[7,2]$ linear code $\mathcal C$ over $\mathbb F_3$ with generator matrix
$$G=\left(\begin{array}{ccccccc}
1 & 0 & 1 & 2 & 1 & 1 & 1 \\
0 & 1 & 2 & 2 & 1 & 0 & 2
\end{array}\right)\in \mathbb F_3^{2\times 7},$$
where the primitive element $\alpha=2$ generates the finite field $\mathbb F_3= \left\{ 0, \alpha=2, \alpha^2 = 1\right\}$ which gives us the following additive table:
$$
\begin{array}{c|cc}
T_+ & \alpha & \alpha^2\\
\hline
\alpha & \alpha^2 & 0\\
\alpha^2 & 0& \alpha\\
\end{array}$$
Or equivalently,
$\left\{ \begin{array}{ccc}
\alpha + \alpha = \alpha^2, & \alpha^2+\alpha = 0, &\alpha^2 + \alpha^2 = \alpha
\end{array}\right\}$.
Therefore, we obtain the following binomials associated to the previous rules:
$$\mathcal R_{X_i} = \left\{\begin{array}{ccc}
x_{i,1}^2 - x_{i,2}, & x_{i,1}x_{i,2}-1, & x_{i,2}^2 - x_{i,1}
\end{array}
\right\} \hbox{ with } i=1, \ldots, 7.$$
Let us label the rows of $G$ by $\mathbf w_1$ and $\mathbf w_2$. By Theorem \ref{Theorem1}, the ideal associated to the linear code $\mathcal C$ may be defined as the following binomial ideal:
\begin{eqnarray*}
I_+(\mathcal C)&=& \left\langle \begin{array}{ccc}
\left\{ \mathbf X^{\Delta\alpha^j \mathbf w_i}-1\right\}{\substack{i=1, 2\\ j=1, 2}}
& \bigcup &
\left\{ \mathcal R_{X_i} \right\}_{i=1, \ldots,  7}
\end{array}\right\rangle\\
& = & 
\left\langle \begin{array}{ccc}
\left\{ \begin{array}{r}
x_{1,2}x_{3,2}x_{4,1}x_{5,2}x_{6,2}x_{7,2}-1, \\
x_{1,1}x_{3,1}x_{4,2}x_{5,1}x_{6,1}x_{7,1}-1, \\
x_{2,2}x_{3,1}x_{4,1}x_{5,2}x_{7,1}-1, \\
x_{2,1}x_{3,2}x_{4,2}x_{5,1}x_{7,2}-1
\end{array}\right\}
& \bigcup &
\left\{ \mathcal R_{X_i} \right\}_{i=1, \ldots,  7}
\end{array}\right\rangle.\\
\end{eqnarray*}
\end{example}

\begin{remark}
\label{Remark2}
Let $B\in \mathbb F_q^{m\times n}$ be a matrix, $B^{\perp}$ be the matrix whose rows generate the null-space of $B$ and $\{\mathbf w_1, \ldots, \mathbf w_k\}$ be a set of generators of the row space of the matrix $B$. We can define the following binomial ideal:
$$I(B) = \left\langle \left\{ \mathbf X^{\Delta \mathbf a} - \mathbf X^{\Delta \mathbf b} \mid 
B^{\perp} (\mathbf a- \mathbf b)^T= \mathbf 0 \right\}\right\rangle.$$
Therefore, the construction presented above for linear codes can be generalized for any matrix defined over an arbitrary finite field, i.e. we have actually proved that $I(B) = I_+(\mathcal C) = I(\mathcal C)$. 
Thus, the definition of $I_+(\mathcal C)$ is in fact independent of the choice of the matrix $B$ and just depends on the subspace $\mathcal C$, i.e. the subspace generated by the row-vectors of $B$.
\end{remark}

Let $B$ be a $m\times n$ matrix defined over $\mathbb F_q$ and let $B_i$ denote the $i$-th column of the matrix $B$.
Let $\mathbf X$ denote $n$ vector variables $X_1, \ldots, X_n$ such that $X_i = \left(x_{i,1}, \ldots, x_{i,q-1} \right)$ for $1\leq i \leq n$ and $\mathbf Y$ denote $m$ vector variables $Y_1, \ldots, Y_m$ such that $Y_j =\left( y_{j,1}, \ldots, y_{j,q-1}\right)$ for $1\leq j \leq m$.
Let $\mathcal R_{Y_j}\subseteq \mathbb K[Y_j]$ be the binomial ideal consisting of all the binomials on the variables $Y_j =\left( y_{j,1}, \ldots, y_{j,q-1} \right)$ associated to the relations given by the additive table of the field $\mathbb F_q = \left\langle \alpha^j \mid j=1, \ldots, q-1 \right\rangle \cup \{ 0\}$ with $1\leq j \leq m$, we define $\mathcal R_{\mathbf Y} = \left \langle \cup_{i=1}^m \mathcal R_{Y_i}\right\rangle \subseteq \mathbb K[\mathbf Y]$. 

We denote by $\mathbb K[\mathbf X, \mathbf Y]_{\mathrm{STD}}$ the set of polynomials in $\mathbb K[\mathbf X, \mathbf Y]$ in standard form, i.e. $f\in \mathbb K[\mathbf X, \mathbf Y]_{\mathrm{STD}}$ if each monomial in its decomposition is in standard form.
 
The ring homomorphism 
$$\begin{array}{cccc}
\Theta_B:& \mathbb K[\mathbf X, \mathbf Y]_{\mathrm{STD}}& \longrightarrow & \mathbb K[\mathbf Y]\\
\end{array}$$
is then defined by $\Theta_B(\mathbf X^{\Delta \mathbf a}) = \mathbf Y^{\Delta (\mathbf a B^T)}$ for every $\mathbf a\in \mathbb F_q^n$, and $\Theta_B(\mathbf Y^{\Delta \mathbf a}) = \mathbf Y^{\Delta \mathbf a}$.
Thus, 
$\Theta_B(x_{i,j})=\Theta_B(X_i^{\Delta \alpha^j}) = \Theta_B(\mathbf X^{\Delta (\alpha^j\mathbf e_i)}) = \mathbf Y^{\Delta (\alpha^j B_i^T)}$ where $\left\{ \mathbf e_1, \ldots, \mathbf e_n\right\}$ denotes the standard basis of $\mathbb F_q^n$.

More generally, for every polynomial $f = \sum c_{\mathbf v} \mathbf X^{\Delta\mathbf v_x} \mathbf Y^{\Delta\mathbf v_y}\in \mathbb K[\mathbf X, \mathbf Y]_{\mathrm{STD}}$ we have that 
$$\Theta_B \left(f\right) = f(\Theta_B(\mathbf X), \mathbf Y) = \sum c_{\mathbf v} \Theta_B\left(\mathbf X^{\Delta\mathbf v_x}\right) \mathbf Y^{\Delta\mathbf v_y}$$
This function can be found in \cite{biase:1995}.

\begin{remark}
\label{Restriction}
Note that the restriction of $\Theta_B$ to $\mathcal R_{\mathbf X}$ is the function defined by:
 $$\begin{array}{cccc}
\Theta_B: &
\mathcal R_{\mathbf X} &
\longrightarrow &
\mathcal R_{\mathbf Y}
\end{array}$$
This assertion is a direct consequence of Lemma \ref{New:Lemma}.

\end{remark}

\begin{lemma} 
\label{Lemma1}
Let us consider the matrix $B\in \mathbb F_q^{m\times n}$ and the vectors $\mathbf a \in \mathbb F_q^n$ and $\mathbf b\in \mathbb F_q^m$. 
The equality $\mathbf a B^T =\mathbf b$ holds if and only if 
$\Theta_B\left( \mathbf X^{\Delta \mathbf a}\right) \equiv \mathbf Y^{\Delta \mathbf b} \mod~ \mathcal R_{\mathbf Y}$.
\end{lemma}

\begin{proof}
This lemma is a straightforward consequence of Lemma \ref{New:Lemma}.
\end{proof}

Another ideal associated to the matrix $B\in \mathbb F_q^{m\times n}$ is defined by
\begin{equation*}
I_B = \left\langle \begin{array}{ccc}
\left\{\Theta_B(x_{i,j}) - x_{i,j}\right\}_{\substack{i=1, \ldots, n\\ j = 1, \ldots, q-1}} &\bigcup&
\left\{\mathcal R_{Y_j}\right\}_{j=1, \ldots, m}
\end{array} \right\rangle \subseteq \mathbb K[\mathbf X, \mathbf Y].
\end{equation*}

\begin{lemma}
\label{Lemma2}
For a given polynomial $f\in \mathbb K[\mathbf X, \mathbf Y]_{\mathrm{STD}}$.
$$\begin{array}{ccc}
f\in I_B & \hbox{ if and only if }&  \Theta_B(f)\equiv 0 \mod \mathcal R_{\mathbf Y}.
\end{array}$$
\end{lemma}

\begin{proof}
For each $j=1, \ldots, m$ we have $\binom{q}{2}$ different binomials in $\mathcal R_{Y_j}$. We denote by 
$r_{j,l}(Y_j)$ the polynomial at position $l$ with respect to certain order in $\mathcal R_{Y_j}$ with $j=1, \ldots, m$.

Let $f\in I_B$, by representing $f$ with the generators of $I_B$, we have that 
$$f(\mathbf X, \mathbf Y) = \sum_{i=1}^n \sum_{j=1}^{q-1} \lambda_{i,j} \left( \Theta_B(x_{i,j}) - x_{i,j} \right)
+ \sum_{j=1}^m \sum_{l=1}^{\binom{q}{2}} \beta_{j,l} r_{j,l}(Y_j)$$
$\begin{array}{ccccc}
\hbox{with }& \left\{\lambda_{i,j}\right\}_{\substack{i=1, \ldots, n\\j=1, \ldots, q-1}} & \hbox{ and } & \left\{ \beta_{j,l}\right\}_{\substack{j=1, \ldots, m\\ l=1, \ldots, \binom{q}{2}}} & \in \mathbb K[\mathbf X, \mathbf Y].
\end{array}$

Then,
\begin{eqnarray*}
\Theta_B(f) & = &  f\left(\Theta_B(\mathbf X), \mathbf Y\right)\\
& = & \sum_{i=1}^n \sum_{j=1}^{q-1}
\Theta_B(\lambda_{i,j}) \left(\Theta_B(x_{i,j}) - \Theta_B(x_{i,j})\right) +
\sum_{j=1}^m\sum_{l=1}^{\binom{q}{2}} \Theta_B(\beta_{j,l}) r_{j,l} (Y_j)\\
& = & \sum_{j=1}^m\sum_{l=1}^{\binom{q}{2}} \Theta_B(\beta_{j,l}) r_{j,l} (Y_j) \equiv 0 \mod \mathcal R_{\mathbf Y}.
\end{eqnarray*}

To prove the converse, first note that given any vector 
$\mathbf a = (a_1, \ldots, a_n) = (\alpha^{j_1}, \ldots, \alpha^{j_n}) \in \mathbb F_q^n$
the monomial $\mathbf X^{\Delta \mathbf a}$ can be written as: 
\begin{eqnarray*}
X_1^{\Delta a_1 }\cdots X_n^{\Delta a_n} & = & x_{1,j_1} \cdots x_{n,j_n} =
\prod_{i=1}^n \left( \Theta_B(x_{i,j_i}) + \left( x_{i,j_i}-\Theta_B(x_{i,j_i})\right)\right)\\
& = & \prod_{i=1}^n \Theta_B(x_{i,j_i}) + \sum_{i=1}^n C_{i,j} \left( x_{i,j_i} - \Theta_B(x_{i,j_i})\right)\\
& = & \Theta_B(\mathbf X^{\Delta \mathbf a}) + \sum_{i=1}^n C_{i,j} \left( x_{i,j_i} - \Theta_B(x_{i,j_i})\right)
\end{eqnarray*}
for some $\left\{ C_{i,j}\right\}_{\substack{i=1, \ldots, n\\ j=1, \ldots, q-1}} \in \mathbb K[\mathbf X, \mathbf Y]$.
Note that, for all polynomial $f\in \mathbb K[\mathbf X, \mathbf Y]_{\mathrm{STD}}$ there exists polynomials $g_i(\mathbf Y), h(\mathbf Y)\in \mathbb K[\mathbf Y]_{\mathrm{STD}}$ and $f_i(\mathbf X)\in \mathbb K[\mathbf X]_{\mathrm{STD}}$ such that
$$f(\mathbf X, \mathbf Y) = \sum_{i\in I} f_i(\mathbf X) g_i(\mathbf Y) + h(\mathbf Y)$$
We have already show that we can write each $f_i(\mathbf X)$ as $f_i(\Theta_B(\mathbf X)) + \hat{f_i}$ with $\hat{f_i}\in I_B$. Thus,
$$f(\mathbf X, \mathbf Y) = \underbrace{\sum_{i\in I} f_i(\Theta_B(\mathbf X))g_i(\mathbf Y) + h(\mathbf Y)}_{=\Theta_B(f)\in \mathcal R_{\mathbf Y}} + \underbrace{\sum_{i\in I} \hat{f_i}g_i(\mathbf Y)}_{\in I_B} \in I_B$$
\end{proof}

\begin{remark}
Lemmas \ref{Lemma2} and \ref{Lemma1} are technical findings valid for any matrix $B$. On the following theorem we applied the above lemmas to a matrix $A^{\perp}$ where $A$ is a generator matrix of the linear code $\mathcal C$.
\end{remark}

Let us recall a well-known property of binomials ideals: 
\begin{corollary}\cite[Corollary 1.3]{eisenbud:1996}
\label{Corollary:1}
If $I\subseteq \mathbb K[X_1, \ldots, X_n]$ is a binomial ideal, then the elimination ideal $I\cap \mathbb K[X_1, \ldots, X_r]$ is a binomial ideal for every $r\leq n$.
\end{corollary}

The following result shows how the ideal associated to the code $\mathcal C$ can also be defined as the ideal associated to a parity check matrix of $\mathcal C$ and also as the kernel of a polynomial ring homomorphism. Note that the ideals $I(A)$ and $I_A$ are independent of the matrix A, they just depend on the subspace generated by the row-vectors of matrix $A$.

\begin{theorem}
\label{Theorem2}
$I_+(\mathcal C) = I(A) = I_{A^{\perp}}\cap \mathbb K[\mathbf X]$.
\end{theorem}

\begin{proof}
To prove that $I_+(\mathcal C) \subseteq I_{A^{\perp}}\cap \mathbb K[\mathbf X]$ it suffices to observe the following:

\begin{itemize}
\item $\Theta_{A^{\perp}}\left(\mathbf X^{\Delta(\alpha^j \mathbf w_i)}-1\right) = \mathbf Y^{\Delta (\alpha^j \mathbf w_i) (A^{\perp})^T}-1$ is the zero binomial, since $A\cdot A^{\perp} = \mathbf 0$, for 
$j\in \{1, \ldots, q-1\}$ and $i\in \{1, \ldots k\}$.

\item By Remark \ref{Restriction} we have that $\Theta_{A^{\perp}}\left(\mathcal R_{\mathbf X}\right) \subseteq ~ \mathcal R_{\mathbf Y}$.
\end{itemize}
Therefore, applying Lemma \ref{Lemma2} we conclude that the set of generators of $I_+(\mathcal C)$ belongs to $I_{A^{\perp}}\cap \mathbb K[\mathbf X]$.

Conversely, let $f= \mathbf X^{\Delta \mathbf a} - \mathbf X^{\Delta \mathbf b}$ be any binomial of
$I_{A^{\perp}}\cap \mathbb K[\mathbf X]$ with $\mathbf a, \mathbf b \in \mathbb F_q^n$. 
Note that Corollary \ref{Corollary:1} allows us taking $f$ as a binomial.
Lemma \ref{Lemma2} implies that $\Theta_{A^{\perp}}(f) = \mathbf Y^{\Delta (A^{\perp} \mathbf a^T)^T} - 
\mathbf Y^{\Delta(A^{\perp} \mathbf b^T)^T} \equiv 0 \mod \mathcal R_{\mathbf Y}$. Hence, by Lemma \ref{Lemma1} we have that $A^{\perp}\mathbf a^T = A^{\perp}\mathbf b^T$, thus by Theorem \ref{Theorem1} $f\in I(A) = I_+(\mathcal C)$. 
\end{proof}

\section{Decoding linear codes using a reduced Gr\"obner basis}
\label{Section2}
In this section we prove that the reduced Gr\"obner basis for the ideal $I_+(\mathcal C)$ w.r.t.~a degree compatible ordering on $\mathbb K[\mathbf X]$ (see for example \cite{cox:2007} for a definition of such orderings) provides an algebraic decoding algorithm associated to computing the reduction of a monomial modulo the binomial ideal $I_+(\mathcal C)$.

If we fix a term order $\prec$ then the \emph{leading term} of a polynomial $f$ with respect to $\prec$, denoted by $\mathrm{LT}_{\prec}(f)$, is the largest monomial among all monomials which occurs with non-zero coefficient in the expansion of $f$. Let $I$ be an ideal in $\mathbb K[\mathbf X]$, then the \emph{initial ideal} $\mathrm{in}_{\prec}(I)$ is the monomial ideal generated by the leading term of all the polynomials in $I$, i.e. 
$\mathrm{in}_{\prec}(I) = \left\{ \mathrm{LT}_{\prec}(f) \mid f \in I\right\}.$
The monomials which do not lie in the ideal $\mathrm{in}_{\prec}(I)$ are called \emph{canonical monomials}.  The semigroup ideal generated by the leading terms of a set of polynomials $F\subseteq \mathbb K[\mathbf X]$ w.r.t.~$\prec$ is denoted by $\mathrm{LT}_{\prec}(F)$.

\begin{definition}
An ordering $\prec$ on $\mathbb K[\mathbf X]$ is said to be \emph{degree compatible} if 
$$\deg(\mathbf X^{\mathbf u}) < \deg(\mathbf X^{\mathbf v}) \hbox{ implies that }\mathbf X^{\mathbf u}\prec \mathbf X^{\mathbf v}$$ 
for all monomials $\mathbf X^{\mathbf u}, \mathbf X^{\mathbf v}\in \mathbb K[\mathbf X]$.
\end{definition}

\begin{definition}
\label{GrobnerBasis} 
A finite set of nonzero polynomials $\mathcal G = \left\{ g_1, \ldots, g_m\right\}$ of the ideal $I$ is a Gr\"obner basis with respect to the term order $\prec$ if the leading terms of the elements of $\mathcal G$ generate the initial ideal $in_{\prec}(I)$. 
Moreover $\mathcal G$ is reduced if
\begin{enumerate}
\item $g_i$ are monic for all $i=1, \ldots, m$.
\item If $i\neq j$ then none of the monomials appearing in the expansion of $g_j$ is divisible by $\mathrm{LT}_{\prec}(g_i)$.
\end{enumerate}
\end{definition}
A well known result is that every non-zero ideal has a unique reduced Gr\"obner basis. Let $\mathcal G$ be a Gr\"obner basis for an ideal $I\subset \mathbb K[\mathbf X]$ and let $f\in \mathbb K[\mathbf X]$. Then there is a unique remainder $r$ on the division of $f$ by $\mathcal G$ called the \emph{normal form} of $f$ and denoted by $\mathrm{Red}(f,\mathcal G)$. For a deeper discussion of Gr\"obner bases we refer the reader to \cite{cox:2007,sturmfels:1996}.

Throughout this section, let $\mathcal G=\left\{g_1, \ldots, g_s\right\}$ be the reduced Gr\"obner basis of the ideal $I_+(\mathcal C)$ with respect to $\succ$, where we take $\succ$ to be any degree compatible ordering on $\mathbb K[\mathbf X]$ with $X_1\prec \ldots \prec X_n$. 

Let us present some elementary facts about Gr\"obner basis of binomials ideals. 
\begin{proposition}\cite[Proposition 1.1]{eisenbud:1996}
Let $\prec$ be an ordering on $\mathbb K[\mathbf X]$, and let $I\subseteq \mathbb K[\mathbf X]$ be a binomial ideal:
\begin{enumerate}
\item The reduced Gr\"obner basis $\mathcal G$ of $I$ with respect to $\prec$ consists of binomials.
\item The normal form with respecto to $\prec$ of any term modulo $\mathcal G$ is again a term.
\end{enumerate}
\end{proposition}

\begin{lemma}
\label{Lemma:3.3}
All the elements of $\mathcal G \setminus \mathcal R_{\mathbf X}$ are in standard form.
\end{lemma}

\begin{proof}
Suppose, contrary to our claim, that there exists an element $g = \mathbf X^{\mathbf g^+} - \mathbf X^{\mathbf g^-}$ in $\mathcal G\setminus \mathcal R_{\mathbf X}$ such that $\mathbf X^{\mathbf g^+}$ and/or $\mathbf X^{\mathbf g^-}$ are not in standard form.

By definition, there exists $i,j$ such that $\mathbf X^{\Delta \alpha^j \mathbf w_i} = \mathbf X^{\mathbf g^+}\mathbf X^{\mathbf u}$. Therefore,
if $l\in \mathrm{supp(\mathbf X^{\mathbf g^+})}$, then $l \in \mathrm{supp}(\mathbf X^{\Delta \alpha^j \mathbf w_i})$.
Or equivalently, $\mathbf X^{\mathbf g^+}$ is in standard form.

Now assume that $\mathbf X^{\mathbf g^-}$ is not in standard form. That is, 
$$\mathbf X^{\mathbf g^-} = \mathbf X^{\mathbf v} x_{i,j_1} x_{i,j_2} \hbox{ with } x_{i,j_1}x_{i,j_2}- x_{i,j_3}\in \mathcal R_{X_i} \hbox{ for some index }j_3.$$ 
We distinguish two cases:
\begin{itemize}
\item $x_{i,j_1}x_{i,j_2} - x_{i,j_3}\in \mathcal G$, which contradicts the fact that $\mathcal G$ is reduced.
\item $x_{i,j_1}x_{i,j_2} - x_{i,j_3} \notin \mathcal G$. Then we deduce that there exists $\hat{g} \neq g$ such that $x_{i,j_1}x_{i,j_2}$ is divisible by $\mathrm{LT}_{\prec}(\hat{g})$, or equivalently, $\mathbf X^{\mathbf g^-}$ is divisible by $\mathrm{LT}_{\prec}(\hat{g})$, again a contradiction.
\end{itemize}
\end{proof}

By the above Lemma, we know that all the elements of $\mathcal G\setminus \mathcal R_{\mathbf X}$ are in standard form so, for all $g_i \in \mathcal G \setminus \mathcal R_{\mathbf X}$ with $i=1, \ldots, s$, we define
$$\begin{array}{ccccc}
g_i = \mathbf X^{\Delta\mathbf g_i^+} - \mathbf X^{\Delta\mathbf g_i^-}
& \hbox{ with }& \mathbf X^{\Delta\mathbf g_i^+} \succ \mathbf X^{\Delta\mathbf g_i^-} &
\hbox{ and }&\mathbf g_i^+ - \mathbf g_i^- \in \mathcal C.
\end{array}$$

\begin{remark}
\label{Lemma3}
From the fact that $\mathcal G$ is a Gr\"obner basis for $I_+(\mathcal C)$, then we can deduce that 
$\mathbf X^{\Delta\mathbf c_1}- \mathbf X^{\Delta\mathbf c_2} \in \langle \mathcal G\rangle$ if and only if $\mathbf c_1 - \mathbf c_2 \in \mathcal C$.
\end{remark}

%

\begin{theorem}
\label{Theorem5}
Let $t$ be the error-correction capability  of $\mathcal C$. 
If $\deg\left(\mathrm{Red}_{\prec}(\mathbf X^{\Delta\mathbf a}, \mathcal G)\right)\leq t$, then the vector $\mathbf e\in \mathbb F_q^n$ verifying that $\mathbf X^{\Delta\mathbf e} = \mathrm{Red}_{\prec}(\mathbf X^{\Delta\mathbf a}, \mathcal G)$ is the error vector corresponding to the received word $\mathbf a\in \mathbb F_q^n$. In other words, $\mathbf c = \mathbf a - \mathbf e\in \mathcal C$ is the closest codeword to $\mathbf a\in \mathbb F_q^n$. Otherwise $\mathbf a$ contains more than $t$ errors.
\end{theorem}

\begin{proof}
Following the definition of \emph{``reduction of a polynomial with respect to $\mathcal G$''}; since $\mathbf X^{\Delta\mathbf e} = \mathrm{Red}_{\prec}(\mathbf X^{\Delta\mathbf a}, \mathcal G)$ there exists polynomials $f_1, \ldots, f_s \in \mathbb K[\mathbf X]$ such that 
\begin{equation}
\label{Equation::1}
\mathbf X^{\Delta\mathbf a} = f_1 g_1 + \cdots + f_s g_s + \mathbf X^{\Delta\mathbf e} \hbox{, or equivalently }
\mathbf X^{\Delta\mathbf a}- \mathbf X^{\Delta\mathbf e}\in \left\langle \mathcal G\right\rangle.
\end{equation}
Remark \ref{Lemma3} now leads to $\mathbf a- \mathbf e\in \mathcal C$. 

Assume that there exists $\mathbf e_2 \in \mathbb F_q^n$ such that $\mathbf a - \mathbf e_2 \in \mathcal C$ and $\mathrm{w}_H(\mathbf e_2)< \mathrm{w}_H(\mathbf e)$; i.e. the total degree of $\mathbf X^{\Delta\mathbf e_2}$ is strictly smaller than the total degree of $\mathbf X^{\Delta\mathbf e}$, $\deg\left(\mathbf X^{\Delta\mathbf e_2}\right) < \deg \left( \mathbf X^{\Delta\mathbf e}\right)$. Then, by Lemma \ref{Lemma3}, there exists $\hat{f_1}, \ldots, \hat{f_s}\in \mathbb K[\mathbf X]$ such that 
$\mathbf X^{\Delta\mathbf a} = \hat{f_1}g_1+ \cdots + \hat{f_s}g_s + \mathbf X^{\Delta\mathbf e_2}$, which contradicts the uniqueness of the normal form.

We have actually proved that the exponent of the normal form of $\mathbf X^{\Delta\mathbf a}$ is in the Voronoi region of $\mathbf 0$. Therefore the normal form of $\mathbf X^{\Delta \mathbf a}$ is the unique solution for the system (\ref{Equation::1}) if $\deg(\mathbf X^{\Delta \mathbf e}) \leq t$. Otherwise $\mathbf a$ contains more than $t$ errors.
\end{proof}

\begin{remark}
\label{Remark 7}
Take notice that we are implicitly assuming that $\mathrm{Red}_{\prec}(\mathbf X^{\Delta \mathbf a}, \mathcal G)$ is a monomial in standard form which is the case. Indeed, we have shown in Lemma \ref{Lemma:3.3} that all the elements of $\mathcal G\setminus \mathcal R_{\mathbf X}$ are in standard form. Thus, even if $\mathcal R_{\mathbf X}\not\subset \mathcal G$ then, the normal form of any monomial in standard form modulo $\mathcal G$ is again a monomial in standard form.  
\end{remark}

The following results shows that one of the elements in $\mathcal G$ provides the error-correction bound of $\mathcal C$.

\begin{proposition}
\label{Proposition1}
Let $t$ be the error-correction capability  of $\mathcal C$, then
\begin{eqnarray*}
t & = & \min \left\{ \mathrm w_H(\mathbf g_i^+) \mid g_i \in \mathcal G \setminus \left\{ \mathcal R_{\mathbf X}\right\}\right\} - 1\\
& = & \min \left\{ \deg(g_i) \mid g_i \in \mathcal G \setminus  \left\{ \mathcal R_{\mathbf X}\right\}\right\}- 1.
\end{eqnarray*}
\end{proposition}

\begin{proof}
This proposition is analogous to \cite[Theorem 3]{borges:2008}. 
Let $\mathbf c$ be a minimum weight nonzero codeword of $\mathcal C$, i.e. $\mathrm w_H(\mathbf c) = d$, where $d$ is the minimum distance of $\mathcal C$. Let $\mathbf X^{\Delta\mathbf c_1}$ and $\mathbf X^{\Delta\mathbf c_2}$ be two monomials in $\mathbb K[\mathbf X]$ such that $\mathbf X^{\Delta\mathbf c} = \mathbf X^{\Delta\mathbf c_1} \mathbf X^{\Delta\mathbf c_2}$, 
$\mathrm{supp}(\mathbf c_1) \cap \mathrm{supp}(\mathbf c_2) = \emptyset$ and $\mathrm w_H(\mathbf c_1) = t+1$, that is to say $\mathbf X^{\Delta\mathbf c_1} \succ \mathbf X^{\Delta\mathbf c_2}$. 

Then $\mathbf X^{\Delta\mathbf c_1} \mathbf X^{\Delta\mathbf c_2} -1 \in I_+(\mathcal C)$, or equivalently $\mathbf X^{\Delta\mathbf c_1} - \mathbf X^{\Delta- \mathbf c_2} \in I_+(\mathcal C)$. 
Note that $\mathrm w_H(\mathbf c_2) = \mathrm w_H(- \mathbf c_2)$, thus $\mathbf X^{\Delta\mathbf c_1} \succ \mathbf X^{\Delta-\mathbf c_2}$. Therefore, we get that $\mathbf X^{\Delta\mathbf c_1}$ belongs to the initial ideal  $\mathrm{in} \left(I_+(\mathcal C)\right)$, so there must exists an index $i\in \{1, \ldots, s\}$ such that the leading term of $g_i \in \mathcal G$ divides $\mathbf X^{\Delta\mathbf c_1}$, and thus, $\mathrm w_H(\mathbf g_i^+)\leq \mathrm w_H(\mathbf c_1) = t+1$.

Now suppose that there exists $g_j \in \mathcal G\setminus \left\{ \mathcal R_{X_l}\right\}_{l=1, \ldots, n}$ with $j\in \left\{1, \ldots, s\right\}$ such that $\mathrm w_H(\mathbf g_j^+)\leq t$. By definition, $\mathbf g_j^+ - \mathbf g_j^-\in \mathcal C\setminus \{\mathbf 0\}$, but 
$$\mathrm w_H(\mathbf g_j^+ - \mathbf g_j^-)\leq \mathrm w_H(\mathbf g_j^+) + \mathrm w_H(\mathbf g_j^-)\leq 2t < d,$$
which contradicts the definition of minimum distance of $\mathcal C$.

Therefore, 
$$t<\min \left\{ \mathrm w_H(\mathbf g_j^+)\mid g_j \in \mathcal G\setminus \left\{ \mathcal R_{X_l}\right\}_{l=1, \ldots, n}\right\} \leq \mathrm w_H(\mathbf g_i^+)\leq t+1,$$
which provides the result.
\end{proof}

\begin{proposition}
\label{Proposition2}
$\mathrm w_H(\mathbf g_i^+) - \mathrm w_H(\mathbf g_i^-) \leq 1$ for all $i\in \{1, \ldots, s\}$.
\end{proposition}

\begin{proof}
Without loss of generality we assume that $i=1$. We can distinguish two cases:
\begin{itemize}
\item The case when $\mathrm{supp}(\mathbf g_1^+) \cap \mathrm{supp}(\mathbf g_1^-) = \emptyset$. 

Let $\mathrm w_H(\mathbf g_1^+- \mathbf g_1^-) = d_1$ and $t_1=\lfloor\frac{d_1-1}{2}\rfloor$. Then we will show that either $\mathrm w_H(\mathbf g_1^+) = t_1$ or $\mathrm w_H(\mathbf g_1^+) = t_1+1$.

Obviously $\mathrm w_H(\mathbf g_1^+) > t_1$, otherwise $\mathrm w_H\left(\mathbf g_1^+ - \mathbf g_1^-\right) \leq 2t_1 < d_1$. Now suppose $\mathrm w_H(\mathbf g_1^+) > t_1+1$. 
Let $x_{i,j}$ be any variable that belongs to the support of 
$\mathbf X^{\Delta\mathbf g_1^+}$, i.e. 
$\mathbf X^{\Delta\mathbf g_1^+} = x_{i,j}\mathbf X^{\Delta\mathbf w}$ with $\mathrm w_H(\mathbf w) + 1 = \mathrm w_H(\mathbf g_1^+)$. Then, there exists an index $l\in \{1, \ldots, q-1\}$ such that $x_{i,j}x_{i,l}-1 \in \mathcal R_{X_i}$. Therefore,
$x_{i,l}\left( \mathbf X^{\Delta\mathbf g_1^+}- \mathbf X^{\Delta\mathbf g_1^-}\right) \equiv \mathbf X^{\Delta\mathbf w} - x_{i,l}\mathbf X^{\Delta\mathbf g_1^-} \mod \mathcal R_{\mathbf X}$. Observe that 
$$\begin{array}{ccc}
\mathrm w_H(\mathbf g_1^-) +1 = d_1-\mathrm w_H(\mathbf g_1^+) +1< t_1 +1 
& \hbox{ and }&
\mathrm w_H(\mathbf w) = \mathrm w_H(\mathbf g_1^+)-1>t_1.
\end{array}$$ As a consequence, $\mathbf X^{\Delta\mathbf w}\succ x_{i,l}\mathbf X^{\Delta\mathbf g_1^-}$ and thus $\mathbf X^{\Delta\mathbf w}\in \mathrm{LT}(\mathcal G\setminus \{g_1\})$, which contradicts the fact that $\mathcal G$ is reduced. 

Therefore $\mathrm w_H(\mathbf g_1^+) = t_1+1$ and $\mathrm w_H(\mathbf g_1^-) = t_1+1$ if $d_1$ is even and $\mathrm w_H(\mathbf g_1^-) = t_1$, otherwise.
In both cases we have that $\mathrm w_H(\mathbf g_1^+)-\mathrm w_H(\mathbf g_1^-)\leq 1$.

\item A similar argument applies to the case $i \in \mathrm{supp}(\mathbf g_1^+) \cap \mathrm{supp}(\mathbf g_1^-)$. 

In other words, 
$g_1 =  \mathbf X^{\mathbf g_1^+} - \mathbf X^{\Delta\mathbf g_1^-} = x_{i,j}\mathbf X^{\Delta\mathbf a} - x_{i,l} \mathbf X^{\Delta\mathbf b}$. There exists an integer $m\in \{1, \ldots, q-1\}$ such that 
$x_{i,j}x_{i,m}-1$ and $x_{i,l}x_{i,m}-x_{i,v}$ belongs to $\mathcal R_{X_i}$.
Thus
$x_{i,m}g_1 \equiv \mathbf X^{\Delta\mathbf a} - x_{i,v}\mathbf X^{\Delta\mathbf b}\mod \mathcal R_{\mathbf X}$. 
Suppose that $\mathbf X^{\Delta\mathbf a}\succ x_{i,v}\mathbf X^{\Delta\mathbf b}$, then $\mathbf X^{\Delta\mathbf a}\in \mathrm{LT}(\mathcal G\setminus \{g_1\})$, is a contradiction. Therefore, $\mathrm w_H(\mathbf b)+ 1 \geq \mathrm w_H(\mathbf a)$ which establishes the desired formula.

Note that it may happen that $l=j$. In this case we would have that $\mathrm w_H(\mathbf b) \geq \mathrm w_H(\mathbf a)$, i.e. $\mathrm w_H(\mathbf g_i^-)\geq \mathrm w_H(\mathbf g_i^+)$ which is impossible except for the case of equality.
\end{itemize}
\end{proof}

\begin{definition}
\label{Definition3}
Let $\mathcal G$ be the reduced Gr\"obner basis of the ideal $I(\mathcal C)$ w.r.t.~a degree compatible ordering $\prec$ in $\mathbb K[\mathbf X]$. We define \emph{``the reduction process $\rightarrow$''} of any monomial $\mathbf X^{\mathbf w} \in \mathbf X$ using $\mathcal G$ as:
\begin{enumerate}
\item Reduce $\mathbf X^{\mathbf w}$ to its standard form $\mathbf X^{\mathbf w'}$ using the relations $\mathcal R_{\mathbf X}$.
\item Reduce $\mathbf X^{\mathbf w'}$ w.r.t.~$\mathcal G\setminus \mathcal R_{\mathbf X}$ 
by the usual one step reduction.
\end{enumerate}
\end{definition}

This reduction process is well defined since it is confluent and noetherian i.e.:
if $\mathbf X^{\mathbf w} \in \mathbf X$ is an arbitrary term. Then:
\begin{enumerate}
\item[$i$)] The reduction process $\rightarrow$ is noetherian.
\item[$ii$)] If $\mathbf X^{\mathbf w}\rightarrow \mathbf X^{\mathbf w_1}$, $\mathbf X^{\mathbf w} \rightarrow \mathbf X^{\mathbf w_2}$ and $\mathbf X^{\mathbf w_1}$, $\mathbf X^{\mathbf w_2}$ are irreducible monomials modulo $\rightarrow$, then $\mathbf X^{\mathbf w_1} = \mathbf X^{\mathbf w_2}$.
\end{enumerate}

%
%

\begin{remark}
\label{Remark5}
The irreducible element corresponding to $\mathbf X^{\mathbf w}$ coincides with the \emph{normal form} of $\mathbf X^{\mathbf w}$ w.r.t.~$\mathcal G$, denoted by $\mathrm{Red}(\mathbf X^{\mathbf w}, \mathcal G)$. The above theorem states that $\mathrm{Red}(\mathbf X^{\mathbf w}, \mathcal G)$ is unique and computable by a typical Buchberger's reduction process.
\index{Normal form}
\end{remark}

\begin{example}
\label{Example3}
Continuing with Example \ref{Example1}, note that the code has Hamming distance $5$ so it corrects up to $2$ errors. A reduced Gr\"obner basis $\mathcal G$ for the ideal $I_+(\mathcal C)$ w.r.t.~the {\tt degrevlex} order with
$$\underbrace{x_{1,1}< x_{1,2}}_{X_1}<\underbrace{x_{2,1}<x_{2,2}}_{X_2}< \cdots < \underbrace{x_{7,1}<x_{7,2}}_{X_7}$$
has $193$ elements. It is easy to check that the binomial $G_{1}=x_{3,1}x_{6,2}x_{7,1} - x_{1,1}x_{2,2}$ and all the generators of the ideal $\mathcal R_{\mathbf X}$ are elements of the reduced Gr\"obner basis.

Let us take the codeword $\mathbf c=(1,2,2,0,0,1,2)$ and add the error vector $\mathbf e=(2,2,0,0,0,0,0)$. Then the received word is $\mathbf y = (0,1,2,0,0,1,2)=\mathbf c + \mathbf e$ which corresponds to the monomial $w=x_{2,2}x_{3,1}x_{6,2}x_{7,1}$. Let us reduce $w$ using $\mathcal G$:
$$\begin{array}{c}
w=x_{2,2}x_{3,1}x_{6,2}x_{7,1} 
\xrightarrow{G_{1}=x_{3,1}x_{6,2}x_{7,1} - x_{1,1}x_{2,2}}
x_{1,1}x_{2,2}x_{2,2}
\xrightarrow{x_{2,2}^2-x_{2,1}\in \mathcal R_{X_2}}
x_{1,1}x_{2,1} .
\end{array}$$
The normal form of $w$ modulo $\mathcal G$ is $x_{1,1}x_{2,1}$ which has weight $2$, then $(2,2,0,0,0,0,0)$ is the error vector corresponding to $w$ and the closest codeword is $x_{1,2}x_{2,1}x_{3,2}x_{6,2}x_{7,1}$, i.e. $\mathbf c = \mathbf y + \mathbf e$.
\end{example}

\begin{remark}
\label{Remark4}
In \cite{marquez:2011} the authors describe another set of generators of the ideal $I(\mathcal C)$ when $\mathcal C$ is a modular code, i.e. codes defined over $\mathbb Z_m$.
In particular for codes over $\mathbb F_q$ with $q$ prime, but not for the case $p^r$ since $\mathbb F_{p^r}\not\cong \mathbb Z_{p^r}$. 
In this article the ideal, denoted by $I_m(\mathcal C)$, is defined by the rows of a generating matrix of the code and the modular relations of $\mathbb Z_m$.
However, for $m\neq2$ such ideal does not allow complete decoding since the reduction does not provide the minimum Hamming weight representative in the coset.
In the following lines we give an example of what is discussed in this note.
\end{remark}

\begin{example}
\label{Example4}
Continuing with the Example \ref{Example1}, now suppose that we consider our code as a linear code over the alphabet $\mathbb Z_3 \cong \mathbb F_3$. Then we can define the ideal associated with $\mathcal C$ as the ideal generated by the following set of binomials (see \cite[Theorem 3.2]{marquez:2011}
for the definition of this ideal and the references given there)
$$I_m(\mathcal C)=\left\langle \begin{array}{ccc}
\left\{ \begin{array}{c}
y_1y_3y_4^2y_5y_6y_7-1, \\
y_2y_3^2y_4^2y_5y_7^2-1 \end{array}
\right\} & \bigcup &
\left\{ y_i^3-1\right\}_{i=1, \ldots, 7}
\end{array}\right\rangle \subseteq \mathbb K[y_1, \ldots, y_7].$$
If we compute a reduced Gr\"obner basis $\mathcal G$ of $I_m(\mathcal C)$ w.r.t.~a {\tt degrevlex} ordering with $y_1 < y_2 < \cdots < y_7$ we obtain $62$ binomials. The elements
$$\begin{array}{ccc}
G_1 = y_3^2y_6y_7^2 - y_1^2y_2 &
\hbox{ and }& 
G_2 = y_1^2y_2^2 - y_4y_5^2y_6
\end{array}$$
are elements of the reduced Gr\"obner basis.

Similarly to Example \ref{Example3}, let us take the codeword $\mathbf c=(1,2,2,0,0,1,2)$ and add the error $\mathbf e=(2,2,0,0,0,0,0)$. Then the received word is $\mathbf y = (0,1,2,0,0,1,2)=\mathbf c + \mathbf e$ which corresponds to the monomial $w=y_2y_3^2y_6y_7^2$. Let us reduce $w$ using $\mathcal G$:
$$\begin{array}{ccccc}
w=y_2y_3^2y_6y_7^2 & 
\xrightarrow{G_1 = y_3^2y_6y_7^2 - y_1^2y_2}&
y_1^2y_2^2&
\xrightarrow{G_2 = y_1^2y_2^2 - y_4y_5^2y_6}&
y_4y_5^2y_6. 
\end{array}$$
The normal form of $w$ modulo $\mathcal G$ is $y_4y_5^2y_6$ which does not correspond to the error vector.
\end{example}

\begin{proposition}
\label{Proposition3}
The set $\mathcal T = \left\{ \mathbf g_i^+ - \mathbf g_i^- \mid i = 1, \ldots, s\right\}$ is a test-set for $\mathcal C$.
\index{Test-set}
\end{proposition}

\begin{proof}
 Let $\mathbf a\in \mathbb F_q^n$ and suppose that $\mathbf a\notin D(\mathbf 0)$. According to Theorem \ref{Theorem5} there exists $\mathbf e\in \mathbb F_q^n$ such that 
\begin{equation}
\label{equation}
\mathrm{Red}_{\prec}(\mathbf X^{\Delta\mathbf a}, \mathcal G) = \mathbf X^{\Delta\mathbf e} \hbox{ where } \mathrm w_H(\mathbf e)< \mathrm w_H(\mathbf a).
\end{equation}

We now apply \emph{``the reduction process $\rightarrow''$}. As $\mathbf X^{\Delta\mathbf a}- \mathbf X^{\Delta\mathbf e}\in I_+(\mathcal C)$ with $\mathbf X^{\Delta\mathbf a} \succ \mathbf X^{\Delta\mathbf e}$, then $\mathbf X^{\Delta\mathbf a}$ is a multiple of $\mathrm{LT}_{\prec}(g_i)$ for some $i=1, \ldots, s$. Or equivalently 
$\mathrm{supp}\left( \Delta \mathbf g_i^+ \right) \subseteq \mathrm{supp}\left( \Delta\mathbf a\right)$, i.e. $\mathrm w_H(\mathbf a- \mathbf g_i^+) = \mathbf w_H(\mathbf a) - \mathbf w_H(\mathbf g_i^+)$. And consequently,
$$\mathrm w_H(\mathbf a - \left(\mathbf g_i^+  - \mathbf g_i^-\right))\leq \mathrm w_H(\mathbf a) - \mathbf w_H(\mathbf g_i^+) + \mathrm w_H(\mathbf g_i^-) \leq \mathbf w_H(\mathbf a).$$
Note that the second inequality is due to the fact that $\mathbf X^{\Delta\mathbf g_i^+} \succ \mathbf X^{\Delta\mathbf g_i^-}$. Note that we have actually proved that $\mathbf X^{\Delta\mathbf a} \longrightarrow \mathbf X^{\Delta\mathbf a - \left(\mathbf g_i^+-\mathbf g_i^-\right)}$. Repeated applications of \emph{``the reduction process $\rightarrow$''} enables us to arrive to $\mathbf X^{\Delta\mathbf e}$.

In case of equality of the above equation, it means that we have not chosen the right binomial $g_i\in \mathcal G$. 
Note that by Equation \ref{equation} there must exists an element $g_j\in \mathcal G$ such that $\mathrm w_H(\mathbf a)> \mathrm w_H(\mathbf a-\mathbf g_j^+ + \mathbf g_j^-)$. 
\end{proof}

\section{FGLM technique to compute a Gr\"obner basis}
\label{Section3}
The aptly-named FGLM algorithm was developed by Faug\`ere, Gianni, Lazard and Mora in \cite{faugere:1993}. This algorithm which only applies to zero-dimensional ideals allows to take a Gr\"obner basis from  a relative easy calculations and convert it to the reduced Gr\"obner basis for the same ideal with respect to another monomial ordering. 

In this section we present an algorithm to compute a reduced Gr\"obner basis of the ideal $I_+(\mathcal C)$ which is associated to a linear code $\mathcal C$ defined over the finite field $\mathbb F_q$. This algorithm goes back to the work of Faug\`ere et al. \cite{faugere:1993} and generalizes that of \cite{borges:2008,fitzpatrick:1997,fitzpatrick:1992}. 

Throughout this section we require some theory of Gr\"obner Bases for submodules $M\subseteq \mathbb K[\mathbf X]^{r}$. We define a term $\mathbf t$ in $\mathbb K[\mathbf X]^{r}$ as an element of the form $\mathbf t= \mathbf X^{\mathbf v} \mathbf e_i$ where $\left\{\mathbf e_i\right\}_{i=1, \ldots, r}$ denote a standard basis of $\mathbb K^{r}$. A term ordering $\prec$ on 
$\mathbb K[\mathbf X]^{r}$ is a total well-ordering such that if $\mathbf t_1 \prec \mathbf t_2$ then 
$\mathbf X^{\mathbf u} \mathbf t_1 \prec \mathbf X^{\mathbf u} \mathbf t_2$ for every pair of terms $\mathbf t_1, \mathbf t_2\in \mathbb K[\mathbf X]^{r}$ and every monomial $\mathbf X^{\mathbf u}\in \mathbb K[\mathbf X]$. 
Let $\prec$ be any monomial order on $\mathbb K[\mathbf X]$ the following term orderings are natural extensions of $\prec$ on $\mathbb K[\mathbf X]^r$:
\begin{itemize}
\item \emph{Term-over-position} order (TOP order) first compares the monomials by $\prec$ and then the position within the vectors in $\mathbb K[\mathbf X]^{r}$. That is to say,
$$\mathbf X^{\alpha}\mathbf e_i\prec_{\mathrm{TOP}} \mathbf X^{\beta}\mathbf e_j \Longleftrightarrow
\begin{array}{ccc} 
\mathbf X^{\alpha}\prec\mathbf X^{\beta} & \hbox{ or } &
\mathbf X^{\alpha} = \mathbf X^{\beta} \hbox{ and } i< j
\end{array}.
$$
\item \emph{Position-over-term} order (POT order) which gives priority to the position of the vector in $\mathbb K[\mathbf X]^{r}$. 
In other words,
$$\mathbf X^{\alpha}\mathbf e_i\prec_{\mathrm{POT}} \mathbf X^{\beta}\mathbf e_j \Longleftrightarrow
\begin{array}{ccc} 
i< j & \hbox{ or } &
i=j \hbox{ and }
\mathbf X^{\alpha}\prec\mathbf X^{\beta}
\end{array}.
$$
\end{itemize}

\begin{definition}
Let $R$ be a commutative ring.
Given a finitely generated $R$-module $M$ and a set $z_1, \ldots, z_n$ of generators, a \emph{syzygy} of $M$ is an element $(g_1, \ldots, g_n)\in R^n$ for which
$g_1 z_1 + \cdots + g_nz_n = 0$.
The set of all syzygies relative to the given generating set is a submodule of $R^n$, called the module of syzygies.
\end{definition}

If we fix a generator matrix $G\in \mathbb F_q^{k\times n}$ of $\mathcal C$ whose rows are labelled by $\{\mathbf w_1, \ldots, \mathbf w_k\}$ and we consider the following set of binomials:
$$F=\left\{ f_{i,j}=\mathbf X^{\Delta\alpha^j \mathbf w_i}-1\right\}_{\substack{i=1, \ldots, k\\ j=1, \ldots, q-1}} \subseteq \mathbb K[\mathbf X].$$
Then, by Theorem \ref{Theorem1}, the set 
$F\cup \left\{ \mathcal R_{X_i}\right\}_{i=1, \ldots, n}$ generates the ideal $I_+(\mathcal C)$. 

Let $r=k(q-1)+1$. Let $M$ be the syzygy module in $\mathbb K[\mathbf X]^r$ with generating set
$$\hat{F}=\left\{-1, f_{1,1}, \ldots, f_{1,q-1}, \ldots, f_{k,1}, \ldots, f_{k,q-1} \right\},$$ 
where the binomials $\left\{ \mathcal R_{X_i}\right\}_{i=1, \ldots, n}$ are considered implicit on the operations.
Note that each syzygy corresponds to a solution of the following equation:
$$-\beta_0 + \sum_{i=1}^k \sum_{j=1}^{q-1} \beta_{(i-1)(q-1)+j}f_{i,j}=0 ~\hbox{ with } \beta_l \in \mathbb K[\mathbf X] \hbox{ for } l=1, \ldots, k(q-1).$$
Hence, the first component of any syzygy of the module $M$ indicates an element of the ideal generated by $F$.

The outline of the proposed algorithm consist of three main parts:
\begin{enumerate}
\item \textbf{Initialization:} Take a Gr\"obner basis, namely $\mathcal G_1$, of the submodule $M\subseteq \mathbb K[\mathbf X]^r$ and choose a term ordering $\prec_2$ on $\mathbb K[\mathbf X]$. The set $\mathcal G_2$ is initially empty but will become the reduced Gr\"obner basis of $M$ w.r.t.~a TOP ordering induced by $\prec_2$.

\begin{remark}
Consider the set
$\mathcal G_1 = \left\{ g_{ij} = \mathbf e_1 f_{i,j} + \mathbf e_{(i-1)(q-1)+j+1}\right\}$
where $\mathbf e_l$ denotes the unit vector of length $r$ with a one in the $l$-th position. We claim that $\mathcal G_1$ is a basis for $M$. 

Moreover, $\mathcal G_1$ is a Gr\"obner basis of $M$ relative to a POT ordering $\prec_{\mathbf w}$  induced by an ordering $\prec$ in $\mathbb K[\mathbf X]$ and the weight vector 
$$\mathbf w = (1, \mathrm{LT}_{\prec}(f_{1,1}), \ldots, \mathrm{LT}_{\prec}(f_{k,q-1})).$$
Note that the leading term of $g_{ij}$ with respect to $\prec_{\mathbf w}$ is $\mathbf e_{(i-1)(q-1)+j+1}$.
\end{remark}

\item \textbf{Main Loop:} Use the FGLM algorithm running through the terms of $\mathbb K[\mathbf X]^r$ using a TOP ordering induced by $\prec_2$ to get the Gr\"obner basis $\mathcal G_2$ of $M$ relative to the new ordering.

\begin{remark}
It is immediate that the normal form with respect to $\mathcal G_1$ of any element is zero except in the first component, that is to say, the linear combinations that Fitzpatrick's algorithm \cite{fitzpatrick:1997} look for, take place in this component. 
\end{remark}

\item \textbf{Conclusion:} It is easily seen that the first component of the elements of $\mathcal G_2$ forms a Gr\"obner basis of $I_+(\mathcal C)$ w.r.t.~$\prec_2$.

\end{enumerate}

Three structures are used in the algorithm:
\begin{itemize}
\item The list {\tt List} whose elements are of the specific type $\mathbf v=\left(\mathbf v[1], \mathbf v[2]\right)$ where $\mathbf v[2]$ represents an element in $\mathbb K[\mathbf X]$ which can be expressed as 
$$\mathbf v[2] = \mathbf v[1] + \sum_{i=1}^k\sum_{j=1}^{q-1} \lambda_{(i-1)(q-1)+j} f_{i,j} \hbox{ with } \lambda_1, \ldots, \lambda_{r-1}\in \mathbb K[\mathbf X].$$ 
Thus, the coefficient vector $\left(\mathbf v[1], \lambda_1, \ldots, \lambda_{r-1}\right)\in \mathbb K[\mathbf X]^{r}$ is the associated vector of $\mathbf v[2]$ on the module $M$. And $\mathbf v[1]$ represents the first component of such vector.
\item The list $G_T$ which ends up being a reduced Gr\"obner basis of $I_+(\mathcal C)$ w.r.t.~a degree compatible ordering $\prec_{T}$.
\item The list $\mathcal N$ of terms that are reduced with respect to $G_T$, i.e. the set of standard monomials.
\end{itemize}

We also require the following subroutines:
\begin{itemize}
\item {\tt InsertNexts(w, List)} inserts the product ${\tt w}x$ for 
$x\in \mathbf X$ in {\tt List} and removes the duplicates, where the binomials of $\left\{ \mathcal R_{X_i}\right\}_{i=1, \ldots, n}$ are considered as implicit in the computation. Then the elements of {\tt List} are sorted by increasing order w.r.t.~$\prec_T$ in the first component of the pairs and in case of equality by comparing the second component. Recall that 
$$\mathbf X =\{X_1, \ldots, X_n \} = \{x_{1,1}, \ldots, x_{1,q-1}, \ldots, x_{n,1}, \ldots, x_{n,q-1} \}.$$
\item {\tt NextTerm(List)} removes the first element from the list {\tt List} and returns it.
\item {\tt Member(v,[$\mathbf v_1,\ldots, \mathbf v_r$])} returns $j$ if ${\tt v} = \mathbf v_j$ or {\tt false} otherwise.
\end{itemize}

\begin{remark}
Note that the computation of $\mathbf X^{\mathbf a}x_{i,j}$ modulo the ideal $\mathcal R_{\mathbf X}$, with $\mathbf a\in \mathbb Z^{n(q-1)}$, acts like the operation $\nabla\mathbf a + \alpha^j \mathbf e_i$ in the finite field $\mathbb F_q^n$ where $\left\{\mathbf e_1, \ldots, \mathbf e_n\right\}$ denotes a standard basis of $\mathbb F_q^n$.
\end{remark}

\begin{algorithm2e}[!h]
\KwData{The rows $\left\{\mathbf w_1, \ldots, \mathbf w_k\right\}\subseteq \mathbb F_q^n$ of a generator matrix of an $[n,k]$ linear code $\mathcal C$ defined over $\mathbb F_q$ and  a degree compatible ordering $\prec_T$ on $\mathbb K[\mathbf X]$.}
\KwResult{A reduced Gr\"obner basis $G_T$ of the ideal $I_+(\mathcal C)$ w.r.t.~$\prec_T$.} 

${\tt List} \longleftarrow \left[ (1,1), \left\{ (1, \mathbf X^{\Delta\alpha^j \mathbf w_i})\right\}_{\substack{i=1, \ldots, k\\ j=1, \ldots, q-1}}\right]$\;
$G_T \longleftarrow \emptyset$;
$\mathcal N\longleftarrow \emptyset$;
$r\longleftarrow 0$\;
\While{${\tt List}\neq \emptyset$}
{
	$\mathbf w \longleftarrow {\tt NextTerm(List)}$\;
	\If{$\mathbf w[1] \notin \mathrm{LT}_{\prec_T}\left(G_T\right)$}
		{
			$j = {\tt Member}(\mathbf w[2], \left[ \mathbf v_1[2], \ldots, \mathbf v_r[2]\right])$\;
			\eIf{$j \neq {\tt false}$}
			{
				$G_T \longleftarrow G_T \cup \left\{ \mathbf w[1] - \mathbf v_j[1]\right\}$\;
			}
			{
				$r\longleftarrow r+1$\;
				$\mathbf v_r \longleftarrow \mathbf w$\;
				$\mathcal N \longleftarrow \mathcal N \cup \left\{ \mathbf v_r[1]\right\}$\;
				${\tt List} = {\tt InsertNexts}(\mathbf w, {\tt List})$\;
			}
		}
} 
\caption{Adapted FGLM algorithm for $I_+(\mathcal C)$}
\label{Algorithm2}
\end{algorithm2e}

\begin{theorem}
\label{Theorem10}
Algorithm \ref{Algorithm2} computes a reduced Gr\"obner basis of the ideal associated to a given linear code $\mathcal C$ of parameters $[n,k]$ defined over $\mathbb F_q$.
\end{theorem}

\begin{proof}
The proof of the algorithm is an extension of that in \cite[Algorithm2.1]{fitzpatrick:1997} and therefore, is also a generalization of the FGLM algorithm \cite{faugere:1993}.
Let $G\in \mathbb F_q^{k\times n}$ be a generator matrix of $\mathcal C$. We label the rows of $G$ by $\left\{ \mathbf w_1, \ldots, \mathbf w_k\right\} \subseteq \mathbb F_q^n$. 

By Theorem \ref{Theorem1} the ideal associated to the linear code $\mathcal C$ may be defined as the following binomial ideal:
\begin{eqnarray*}
I_+(\mathcal C) & = & \left\langle \begin{array}{ccc} 
\left\{ \mathbf X^{\Delta\alpha^j \mathbf w_i}-1\right\}_{\substack{i=1, \ldots, k\\ j=1, \ldots, q-1}}
& \bigcup & \left\{ \mathcal R_{X_i}\right\}_{i=1, \ldots, n}
\end{array}\right\rangle\\
& = & \left\langle \begin{array}{ccc} 
\left\{ f_{i,j}\right\}_{\substack{i=1, \ldots, k\\ j=1, \ldots, q-1}} 
&\bigcup & \left\{\mathcal R_{X_i}  \right\}_{i=1, \ldots, n}
\end{array}\right\rangle.
\end{eqnarray*}

We first show that $G_T$ is a subset of binomials of the ideal $I_+(\mathcal C)$. The proof is based on the following observation: $\mathbf X^{\mathbf a}- \mathbf X^{\mathbf b}\in G_T$ if and only if it corresponds to the first component of a syzygy in the module $M$. In other words,

$$\mathbf X^{\mathbf a} - \mathbf X^{\mathbf b} \equiv 
\sum_{i=1}^k \sum_{j=1}^{q-1} \lambda_{(i-1)(q-1)+j} f_{i,j} \mod \mathcal R_{\mathbf X}
\hbox{ with }
\lambda_1, \ldots, \lambda_{r-1} \in \mathbb K[\mathbf X],$$
or equivalently, $\mathbf X^{\mathbf a}-\mathbf X^{\mathbf b} \in I_+(\mathcal C)$.

Moreover, we claim that the initial ideal of $I_+(\mathcal C)$ w.r.t.~$\prec_{T}$ is generated by the leading terms of polynomials in $G_T$. Indeed, by Theorem \ref{Theorem1}, any binomial $f(\mathbf X)$ of $I_+(\mathcal C)$ can be written uniquely as a linear combination of elements in the generator set $F= \left\{f_{i,j}\right\}_{\substack{i=1, \ldots, k\\ j=1, \ldots, q-1}}$ modulo the ideal $\mathcal R_{\mathbf X}$, i.e. 
$$f(\mathbf X) = \sum_{i=1}^k \sum_{j=1}^{q-1} \lambda_{(i-1)(q-1)+j} f_{i,j} \mod \mathcal R_{\mathbf X}\hbox{ with } \lambda_1, \ldots, \lambda_{r-1}\in \mathbb K[\mathbf X].$$

Therefore, $\mathrm{LT}_{\prec_T}\left(f(\mathbf X)\right)$ is a multiple of the leading term of an element of $F$ that appears on its decomposition. But $\mathrm{LT}_{\prec_T}\left( f_{i,j}(\mathbf X)\right)$ cannot be in $\mathcal N$ for all $i=1, \ldots, k$ and $j=1, \ldots, q-1$.
To see this, note that the first element introduced in the set $\mathcal N$ is always $1$ and
$$1 = \mathbf X^{\Delta\alpha^j \mathbf w_i} - f_{i,j}
\hbox{ i.e. } \mathrm{Red}_{\prec_{T}}\left( \mathbf X^{\Delta\alpha^j \mathbf w_i}, F\right) =1,$$
which implies that $\mathbf X^{\Delta\alpha^j \mathbf w_i}-1\in G_T$. 

By definition, $G_T$ is reduced since we only consider on the algorithm terms which are not divisible by any leading term of the Gr\"obner basis.

Finally, since $I_+(\mathcal C)$ has finite dimension, then the number of terms in $\mathcal N$ is bounded. Note that at each iteration of the main loop either the size of {\tt List} decreases or the size of $\mathcal N$ increases, thus there are only a finite number of iterations.
This completes the proof of the algorithm.
\end{proof}

\begin{remark}
\label{Remark7}
Recall that the dimension of the quotient vector space $\mathbb F_q^n / \mathcal C$ is 
$n-k$. Moreover, if $\mathcal C$ can correct up to $t$ errors, then every word $\mathbf e$ of weight $\mathrm w_H(\mathbf e)\leq t$ is the unique coset leader (vectors of minimal weight in their cosets) of its coset modulo $\mathcal C$. In other words, all monomials of degree less than $t$ modulo the ideal $\mathcal R_{\mathbf X}$ should be standard monomials for $G_T$.

Note that the writing rules given by the ideal $\mathcal R_{\mathbf X}$ implies that 
``the exponent of each variable $x_{i,j}$ is $0$ or $1$'' and
``two different variables $x_{i,j}$ and $x_{i,l}$ can not appear in a monomial''.
Thus, the number of standard monomials of a $t$-error correcting code is at least
\begin{equation}
\label{Equation1}
M=\sum_{l=1}^t (q-1)^l \binom{n}{l}.
\end{equation}
Accordingly, if $q^{n-k} = M$, then all cosets have a unique coset leader of weight smaller or equal to $t$. 
Codes that achieve this equality are the so-called \emph{perfect codes}.
Also for perfect codes, their Voronoi regions are disjoints.
Otherwise, there must appear some cosets leaders of weight at most $\rho(\mathcal C)$, where $\rho(\mathcal C)$ denotes the covering radius of $\mathcal C$, but never as the unique leader, or equivalently there exists standard monomials of degree up to $\rho(\mathcal C)$. Recall that $\rho(\mathcal C)$ coincide with the largest weight among all the cosets leaders of $\mathcal C$, so $\rho(\mathcal C) = t$ if $\mathcal C$ is a perfect code.

By Proposition \ref{Proposition2} in the worst case, a minimal generator of the initial ideal $\mathrm{in}_{<}(I_+(\mathcal C))$ has degree $\rho(\mathcal C)+1$ where $<$ is a degree compatible ordering.
\end{remark}

\begin{theorem}
Let $\mathcal C$ be a linear code over $\mathbb F_q$ of length $n$ and covering radius $\rho(\mathcal C)$. If the basis field operations need an unit time, then Algorithm \ref{Algorithm2} needs a total time of $\mathcal O \left( Dn^2 (q-1) \log (q) \right)$, where
$$D= \sum_{i=1}^{\rho(\mathcal C)+1} (q-1)^i \binom{n}{i}.$$
\end{theorem}

\begin{proof}
The main time of the algorithm is devoted to the management of {\tt InsertNexts}. In each main loop iteration this function first introduces $n(q-1)$ new elements to the list {\tt List}, then compares all the elements and finally eliminates redundancy.

Note that comparing two monomials in $\mathbb K[\mathbf X]$ is equivalent to comparing vectors in $\mathbb F_q^n$, thus we need $\mathcal O(n \log(q))$ 
field operations.

At iteration $i$, after inserting the new elements in the list {\tt List} we would have at most $D_i$ elements where
$$D_i=\underbrace{(q-1)k}_{\substack{\hbox{Elements that} \\ \hbox{initialized {\tt List}}}} + 
\underbrace{i \left( n(q-1)\right)-i}_{\substack{\hbox{At each iteration}\\ \hbox{the first element is removed} \\ \hbox{and we add } $n(q-1)$ \hbox{new elements}}}.$$

By Remark \ref{Remark7} we have an upper bound $D$ for the number of times that {\tt InsertNexts} should be called. This gives a total time of
$$\mathcal O\left(n\log(q)\left( (q-1)k + D\left(n(q-1)\right)-D\right)\right)
\sim \mathcal O \left( Dn^2(q-1)\log(q)\right).$$

\end{proof}

\begin{algorithm2e}[!h]
\KwData{The rows $\left\{\mathbf w_1, \ldots, \mathbf w_k\right\}\subseteq \mathbb F_q^n$ of a generator matrix of the code $\mathcal C$ and  a degree compatible ordering $\prec_T$ on $\mathbb K[\mathbf X]$.}
\KwResult{A minimal Gr\"obner test-set $\mathcal T$ for $\mathcal C$.}

\tcp{For each binomial $\mathbf g = \mathbf X^{\mathbf a} - \mathbf X^{\mathbf b}$ we define $\overline{\mathbf g} := \nabla\mathbf a - \nabla\mathbf b \in \mathbb F_q^n$}
\tcp{Add the following lines after Step 7 of Algorithm \ref{Algorithm2}.}

$\mathbf g \longleftarrow \mathbf w[1] - \mathbf v_j[1]$\;

\If{ $\mathrm{supp}(\overline{\mathbf g})\not\supset \mathrm{supp}(\overline{\mathbf g_i})$ for all $\mathbf g_i \in G_T\setminus \{ \mathbf g\}$}
{
	$\mathcal T \longleftarrow \mathcal T \cup \left\{ \overline{\mathbf g}\right\}$
}

\caption{Algorithm for computing a minimal Gr\"obner test-set for $\mathcal C$}
\label{Algorithm6}
\end{algorithm2e}

By Proposition \ref{Proposition3}, the set of codewords related with the exponents of a reduced Gr\"obner basis of the ideal associated with a linear code $\mathcal C$ with respect to a degree compatible ordering induces a test-set $\mathcal T$ for $\mathcal C$. However, not all the codewords of this test-set are codewords of minimal support, i.e. this set is somehow redundant. We can reduce the number of codewords to the set $\mathcal T \cap \mathcal M_{\mathcal C}$, which is still a test-set for the code $\mathcal C$, using Algorithm \ref{Algorithm6}. 
Moreover, once a vector is stored we can omit its multiples as proposed Algorithm \ref{Algorithm1}.
The obtained test-set is called a \emph{minimal Gr\"obner test-set}.

On the following example we compared the cost storage of the proposed GDDA with Complete Syndrome Decoding. 

\begin{example}
Consider $\mathcal C$ an $[9,3,3]$ ternary code with generator matrix 
$$G=\left( \begin{array}{ccccccccc}
1 & 0 & 0 & 0 & 0 & 1 & 0 & 2 & 0 \\
0 & 1 & 0 & 0 & 1 & 1 & 1 & 0 & 1\\
0 & 0 & 1 & 1 & 2 & 2 & 1 & 1 & 0
\end{array}\right)\in \mathbb F_3^{3\times 9}$$
This code has $3^3 = 27$ codewords. If we compute a reduced Gr\"obner basis $\mathcal G$ of $I_+(\mathcal C)$ we obtained a test-set consisting of $24$ codewords. But for decoding  we just need a minimal test-set (we can eliminate those elements which are multiples and those codewords which are not of minimal support). That is, if we apply \textbf{GDDA} we just need to save in memory $12$ elements:

\begin{minipage}{\linewidth}
\begin{multicols}{3}
\small
\begin{enumerate}
\setlength\itemsep{0.1em}
\item[] $(1, 2, 1, 1, 1, 2, 0, 0, 2)$ 
\item[] $(0, 1, 1, 1, 0, 0, 2, 1, 1)$
\item[] $(1, 2, 0, 0, 2, 0, 2, 2, 2)$
\item[] $(1, 1, 1, 1, 0, 1, 2, 0, 1)$
\item[] $(0, 1, 2, 2, 2, 2, 0, 2, 1)$
\item[] $(0, 0, 1, 1, 2, 2, 1, 1, 0)$
\item[] $(1, 1, 0, 0, 1, 2, 1, 2, 1)$
\item[] $(1, 0, 1, 1, 2, 0, 1, 0, 0)$
\item[] $(1, 1, 2, 2, 2, 0, 0, 1, 1)$
\item[] $(0, 1, 0, 0, 1, 1, 1, 0, 1)$
\item[] $(2, 0, 0, 0, 0, 2, 0, 1, 0)$
\end{enumerate}
\end{multicols}
\end{minipage}

But if we apply \textbf{Complete Syndrome Decoding} we need to store $\frac{q^{n-k}-1}{(q-1)} = 364$ coset leaders (we use here the same trick, neither the zero vector nor the multiples of a coset leader are stored).
\end{example}

Our experimental results are in good agreement with the following conjecture.

\begin{conjecture}
Given an $[n,k]$ linear code $\mathcal C$ over $\mathbb F_q$.
Let $\mathcal T_{\mathcal G}$ be a test-set for $\mathcal C$ induced by a reduced Gr\"obner basis $\mathcal G$ of the ideal $I(\mathcal C)$ w.r.t. a degree compatible ordering. Then,
$$|T_{\mathcal G}| < \frac{q^{n-k}-1}{(q-1)}$$
That is, the cost storage of GDDA is smaller than Complete Syndrome Decoding.
\end{conjecture}

\section{Set of codewords of minimal support}
\label{Section4}

We define the \emph{Universal Gr\"obner basis} of $I_+(\mathcal C)$, denoted by $\mathcal U_{\mathcal C}$, to be the union of all reduced Gr\"obner Bases $\mathcal G_{\prec}$ of $I_+(\mathcal C)$ as $\prec$ runs over all terms orders of $\mathbb K[\mathbf X]$.
A binomial $\mathbf X^{\mathbf u_1} - \mathbf X^{\mathbf u_2}$ in $I_+(\mathcal C)$ is called \emph{primitive} if there exists no other binomial $\mathbf X^{\mathbf v_1} - \mathbf X^{\mathbf v_2}\in I_+(\mathcal C)$ such that $\mathbf X^{\mathbf v_1}$ divides $\mathbf X^{\mathbf u_1}$ and $\mathbf X^{\mathbf v_2}$ divides $\mathbf X^{\mathbf u_2}$.


\begin{lemma}
\label{Lemma4}
Every binomial in $\mathcal U_{\mathcal C}$ is primitive.
\end{lemma}

\begin{proof}
It is a straightforward generalization of \cite[Lemma 4.6]{sturmfels:1996}. 
Let us fix an arbitrary term ordering $\prec$ in $\mathbb K[\mathbf X]$, and let $\mathcal G_{\prec}$ be the reduced Gr\"obner basis of $I_+(\mathcal C)$ w.r.t.~$\prec$.
By definition, for any binomial $\mathbf X^{\mathbf u_1} - \mathbf X^{\mathbf u_2}$ in $\mathcal G_{\prec}$ with $\mathbf X^{\mathbf u_1} \succ \mathbf X^{\mathbf u_2}$, $\mathbf X^{\mathbf u_1}$ is a minimal generator of the initial ideal $\mathrm{in}_{\prec}\left(I_+(\mathcal C)\right)$ and $\mathbf X^{\mathbf u_2}$ is a canonical monomial. Now suppose that $\mathbf X^{\mathbf u_1} - \mathbf X^{\mathbf u_2}$ is not primitive, or equivalently there exists another binomial 
$\mathbf X^{\mathbf v_1} - \mathbf X^{\mathbf v_2}$ in $I_+(\mathcal C)$ such that 
$\mathbf X^{\mathbf v_1}$ divides $\mathbf X^{\mathbf u_1}$ and $\mathbf X^{\mathbf v_2}$ divides $\mathbf X^{\mathbf u_2}$. We distinguish two cases:
\begin{itemize}
\item If $\mathbf X^{\mathbf v_1} \succ \mathbf X^{\mathbf v_2}$, then $\mathbf X^{\mathbf u_1}$ is not a minimal generator of the initial ideal $\mathrm{in}_{\prec}\left(I_+(\mathcal C)\right)$.
\item If $\mathbf X^{\mathbf v_1}\prec \mathbf X^{\mathbf v_2}$, then $\mathbf X^{\mathbf u_2}$ is not in canonical form.
\end{itemize}
Both cases contradicts our assumption. 
\end{proof}

We call the set of all primitive binomials of $I_+(\mathcal C)$ the \emph{Graver basis} of $I_+(\mathcal C)$ and denote it by $\mathrm{Gr}_{\mathcal C}$.
\index{Graver basis}

\begin{corollary}
$\mathcal U_{\mathcal C} \subseteq \mathrm{Gr}_{\mathcal C}$.
\end{corollary}

\begin{proof}
The result is a direct consequence of Lemma \ref{Lemma4}.
\end{proof}

The following theorem suggests an algorithm for computing the Graver basis of the ideal $I_+(\mathcal C)$. For this purpose we describe the Lawrence lifting of the ideal $I_+(\mathcal C)$.
\begin{definition}
We define the \emph{Lawrence lifting} of the ideal $I_+(\mathcal C)$ as the ideal
$$I_{\Lambda(\mathcal C)} = \left\langle \left\{ \mathbf X^{\Delta \mathbf w_1} \mathbf Z^{\Delta \mathbf w_2} - \mathbf X^{\Delta \mathbf w_2}\mathbf Z^{\Delta \mathbf w_1} \mid \mathbf w_1 - \mathbf w_2 \in \mathcal C\right\}
\right\rangle$$
in the polynomial ring $\mathbb K[\mathbf X, \mathbf Z]$ where $\mathbf X$ and $\mathbf Z$ denote $n(q-1)$ variables each.
\end{definition}

\begin{theorem}
The Graver basis of $I_{\Lambda(\mathcal C)}$ coincides with any reduced Gr\"obner basis of $I_{\Lambda(\mathcal C)}$.
\end{theorem}

\begin{proof}
The proof starts with the observation that a binomial $\mathbf X^{\Delta \mathbf u_1}- \mathbf X^{\Delta \mathbf u_2}$ is primitive in the ideal $I_+(\mathcal C)$ if and only if the corresponding binomial $\mathbf X^{\Delta \mathbf u_1}\mathbf Z^{\Delta \mathbf u_2} - \mathbf X^{\Delta \mathbf u_2}\mathbf Z^{\Delta \mathbf u_1}$ in the lifting ideal $I_{\Lambda(\mathcal C)}$ is primitive. Therefore, between the Graver basis of the ideals $I_+(\mathcal C)$ and $I_{\Lambda(\mathcal C)}$ there exists the following relation:
$$\mathrm{Gr}_{\Lambda(\mathcal C)} = \left\{ 
\mathbf X^{\Delta \mathbf u_1} \mathbf Z^{\Delta \mathbf u_2} - \mathbf X^{\Delta \mathbf u_2} \mathbf Z^{\Delta \mathbf u_1} \mid \mathbf X^{\Delta \mathbf u_1}- \mathbf X^{\Delta \mathbf u_2}\in \mathrm{Gr}_{\mathcal C}
\right\}.$$
Now, take any element $g=\mathbf X^{\Delta \mathbf u_1} \mathbf Z^{\Delta \mathbf u_2} - \mathbf X^{\Delta \mathbf u_2} \mathbf Z^{\Delta \mathbf u_1}$ in $\mathrm{Gr}_{\Lambda(\mathcal C)}$. Let $B$ be the set of all binomials in $I_{\Lambda(\mathcal C)}$ except $g$ and assume that $B$ generates the ideal $I_{\Lambda(\mathcal C)}$. Therefore $g$ can be written as a linear combination of the elements of $B$. In other words, there exists a binomial 
$\mathbf X^{\Delta \mathbf v_1}\mathbf Z^{\Delta \mathbf v_2} - \mathbf X^{\Delta \mathbf v_2} \mathbf Z^{\Delta \mathbf v_1}$ in $B$ such that one of its terms divides the leading term of $g$. Replacing 
$\mathbf v= (\mathbf v_1, \mathbf v_2)$ by $- \mathbf v=(- \mathbf v_1, -\mathbf v_2)$ in $\mathbb F_q^n$ if necessary, we may assume that $\mathbf X^{\Delta \mathbf v_1}\mathbf Z^{\Delta \mathbf v_2}$ divides $\mathbf X^{\Delta \mathbf u_1} \mathbf Z^{\Delta \mathbf u_2}$, contrary to the fact that $\mathbf X^{\Delta \mathbf u_1} - \mathbf X^{\Delta \mathbf u_2}$ is primitive in $I_+(\mathcal C)$. So some non-zero scalar multiple of $g$ must appear in any reduced Gr\"obner basis of $I_{\Lambda(\mathcal C)}$ which is also a minimal generating set of $I_{\Lambda(\mathcal C)}$.
\end{proof}

This theorem gives us an algorithm to compute a Graver basis of the ideal $I_+(\mathcal C)$,
exposed as Algorithm \ref{Algorithm3}. Note that Step $3$ of Algorithm \ref{Algorithm3} can be executed by applying Algorithm \ref{Algorithm2}.
Later in Theorem \ref{Theorem11} we will give a set of generators of the lawrence lifting ideal $I_{\Lambda(\mathcal C)}$ which will facilitate the implementation of this algorithm.

\begin{algorithm2e}[!h]
\KwData{An $[n,k]$ linear code $\mathcal C$ defined over $\mathbb F_q$.}
\KwResult{The Graver basis of the ideal $I_+(\mathcal C)$, $\mathrm{Gr}_{\mathcal C}$.} 
Choose any term order $\prec$ on $\mathbb K[\mathbf X, \mathbf Z]$\;
Compute the Lawrence lifting ideal $I_{\Lambda(\mathcal C)}$\;
Compute a reduced Gr\"obner basis of $I_{\Lambda(\mathcal C)}$ w.r.t.~$\prec$\;
Substitute the variable $\mathbf Z$ by $\mathbf 1$\;

\caption{Algorithm for computing the Graver basis of $I_+(\mathcal C)$}
\label{Algorithm3}
\end{algorithm2e}

Here is another way of defining the ideal $I_{\Lambda(\mathcal C)}$.
\begin{theorem}
\label{Theorem11}
Let $\mathcal C$ be an $[n,k]$ linear code defined over $\mathbb F_q$ and $\left\{ \mathbf w_1, \ldots, \mathbf w_k\right\}$ be the rows of a generator matrix of $\mathcal C$. We define the ideal:
$$I_3 = \left\langle \begin{array}{ccc}
\left\{ \mathbf X^{ \Delta\alpha^j \mathbf w_i} - \mathbf Z^{ \Delta\alpha^j \mathbf w_i} \right\}_{\substack{i=1, \ldots, k\\ j=1, \ldots, q-1}} & \bigcup &
\left\{ \mathcal R_{X_i}, ~\mathcal R_{Z_i}\right\}_{i=1, \ldots, n} 
 \end{array}\right\rangle.$$
 Then $I_{\Lambda(\mathcal C)} = I_3$.
\end{theorem}

\begin{proof}
The following result may be proved in the same way as Theorem \ref{Theorem1}.
It is easily seen that all the binomials of the generating set of $I_3$ belongs to $I_{\Lambda(\mathcal C)}$. Indeed, the exponents of all the binomials of the sets $\mathcal R_{X_i}$ and $\mathcal R_{Z_i}$ correspond to the codeword $\mathbf 0 \in \mathcal C$.

Conversely, we need to show that each binomial 
$\mathbf X^{\Delta \mathbf a}\mathbf Z^{\Delta \mathbf b} - \mathbf X^{\Delta \mathbf b}\mathbf Z^{\Delta \mathbf a}$ in $I_{\Lambda(\mathcal C)}$ belongs to $I_3$. Applying the definition of the ideal $I_{\Lambda(\mathcal C)}$ we can rewrite $\mathbf a -  \mathbf b \in \mathcal C$ as
$$ \mathbf a -  \mathbf b = \lambda_1 \mathbf w_1 + \cdots + \lambda_k \mathbf w_k
\hbox{ with } \lambda_1, \ldots, \lambda_k \in \mathbb F_q.$$

We have that
{\small
\begin{eqnarray*}
\mathbf X^{\Delta (\mathbf a - \mathbf b)} \mathbf Z^{\Delta (\mathbf b-\mathbf a)}-1 & = &
\left( \mathbf X^{ \Delta\lambda_1\mathbf w_1}\mathbf Z^{ \Delta-\lambda_1 \mathbf w_1}-1\right)
\prod_{i=2}^k \mathbf X^{ \Delta\lambda_i \mathbf w_i} \mathbf Z^{ \Delta-\lambda_i \mathbf w_i} \\
& + &
\left( \prod_{i=2}^k \mathbf X^{ \Delta\lambda_i \mathbf w_i} \mathbf Z^{ \Delta-\lambda_i \mathbf w_i}-1 \right) 
\mod \left\{ \mathcal R_{\mathbf X}, \mathcal R_{\mathbf Z}\right\}\\
&=& \left( \mathbf X^{ \Delta\lambda_1\mathbf w_1}\mathbf Z^{ \Delta-\lambda_1 \mathbf w_1}-1\right)
\prod_{i=2}^k \mathbf X^{ \Delta\lambda_i \mathbf w_i} \mathbf Z^{ \Delta-\lambda_i \mathbf w_i} + \\
& + & \left( \mathbf X^{ \Delta\lambda_2\mathbf w_2}\mathbf Z^{ \Delta-\lambda_2 \mathbf w_2}-1\right)
\prod_{i=3}^k \mathbf X^{ \Delta\lambda_i \mathbf w_i} \mathbf Z^{ \Delta-\lambda_i \mathbf w_i} + \cdots + \\
& + & \left( \mathbf X^{ \Delta\lambda_{k-1}\mathbf w_{k-1}} \mathbf Z^{ \Delta-\lambda_{k-1}\mathbf w_{k-1}}-1\right)\mathbf X^{ \Delta\lambda_k \mathbf w_k} \mathbf Z^{ \Delta-\lambda_k \mathbf w_k} \\
& + & \left( \mathbf X^{ \Delta\lambda_k \mathbf w_k} \mathbf Z^{ \Delta-\lambda_k\mathbf w_k}-1\right) \mod \left\{ \mathcal R_{\mathbf X}, \mathcal R_{\mathbf Z}\right\}.\\
\end{eqnarray*}}

The last equation forces that
$$\mathbf X^{\Delta(\mathbf a - \mathbf b)}\mathbf Z^{\Delta (\mathbf b-\mathbf a)}-1 \in 
\left\langle \left\{ \mathbf X^{ \Delta\alpha^j \mathbf w_i}\mathbf Z^{ \Delta-\alpha^j \mathbf w_i}-1\right\}_{\substack{i=1, \ldots, k\\ j=1, \ldots, q-1}}
\cup 
\left\{ \mathcal R_{\mathbf X}, \mathcal R_{\mathbf Z}\right\} \right\rangle.$$

Note that we have actually proved that
$$\mathbf X^{\Delta \mathbf a}\mathbf Z^{\Delta \mathbf b} - \mathbf X^{\Delta \mathbf b}\mathbf Z^{\Delta \mathbf a}  \mod \left\langle \mathcal R_{\mathbf X}, \mathcal R_{\mathbf Z} \right\rangle = 
\left( \mathbf X^{\Delta (\mathbf a - \mathbf b)}\mathbf Z^{\Delta (\mathbf b-\mathbf a)} -1\right) 
\in I_3,$$ which completes the proof.
\end{proof}

The following result suggests an algorithm to compute the set $\mathcal M_{\mathcal C}$. Note that given the set $\mathcal M_{\mathcal C}$ we could deduce the minimum distance of $\mathcal C$.

\begin{theorem}
\label{Theorem9}
The set of codewords of minimal support of the code $\mathcal C$ is a subset of the vectors related to the Graver basis of the ideal associated to $\mathcal C$.
\end{theorem}

\begin{proof}
Let $\mathbf m \in \mathcal M_{\mathcal C}$. Suppose the theorem is false, then no binomial of type $\mathbf X^{\Delta \mathbf a} - \mathbf X^{\Delta \mathbf b}\in I_+(\mathcal C)$ with $\mathbf a - \mathbf b = \mathbf m$ would be primitive.

We can always choose a binomial representation $\mathbf X^{\Delta \mathbf a} - \mathbf X^{\Delta \mathbf b}$ (among all the possible) such that the following condition hold, labelled as \textbf{necessary condition}:

\begin{itemize}
\item If $x_{i,r}\in \mathrm{supp}\left( \mathbf X^{\Delta \mathbf a}\right)$ and 
$x_{i,s}\in \mathrm{supp}\left( \mathbf X^{\Delta \mathbf b}\right)$, then $x_{i,r}x_{i,s}-1\notin \mathcal R_{X_i}$. 
Otherwise we take $x_{i,s}\left(
\mathbf X^{\Delta \mathbf a}- \mathbf X^{\Delta \mathbf b}\right)\in I_+(\mathcal C)$, with $i=1, \ldots, n$, instead.

\end{itemize}

Let $\mathbf X^{\Delta \mathbf v_1}- \mathbf X^{\Delta \mathbf v_2}$ be a primitive binomial of $I_+(\mathcal C)$ such that $\mathbf X^{\Delta \mathbf v_1}$ divides $\mathbf X^{\Delta\mathbf a}$ and $\mathbf X^{\Delta \mathbf v_2}$ divides $\mathbf X^{\Delta\mathbf b}$, or equivalently,
$$\begin{array}{ccc}
\mathrm{supp}(\Delta \mathbf v_1) \subset \mathrm{supp}(\Delta \mathbf a) & \hbox{ and }&
\mathrm{supp}(\Delta \mathbf v_2) \subset \mathrm{supp}(\Delta \mathbf b)
\end{array}$$
The \textbf{necessary conditions} defined above guarantee that if there exists a nonzero coordinate $i\in \mathrm{supp}(\mathbf a)\cap \mathrm{supp}(\mathbf b)$ then $i\in \mathrm{supp}(\mathbf a - \mathbf b)$. Therefore, we found $\mathbf v_1 - \mathbf v_2 \in \mathcal C\setminus \{ \mathbf m\}$ such that $\mathrm{supp}(\mathbf v_1 - \mathbf v_2)\subset \mathrm{supp}(\mathbf m)$, which contradicts the minimality of $\mathbf m$.

\end{proof}

\begin{remark}
We could get rid of the leftover codewords from the set obtained by the above theorem using Algorithm \ref{Algorithm6}.
\end{remark}

\begin{corollary}
The set of codewords of minimal support of any linear code $\mathcal C$ can be computed from the ideal 
$$I_3 = \left\langle \begin{array}{ccc}
\left\{ \mathbf X^{\Delta\alpha^j \mathbf w_i} - \mathbf Z^{\Delta\alpha^j \mathbf w_i} \right\}_{\substack{i=1, \ldots, k\\ j=1, \ldots, q-1}} & \bigcup &
\left\{ \mathcal R_{X_i}, ~~
\mathcal R_{Z_i}\right\}_{i=1, \ldots, n} 
 \end{array}\right\rangle.$$
\end{corollary}

\begin{proof}
This result follows directly from Theorems \ref{Theorem11} and \ref{Theorem9}.
\end{proof}

Algorithm \ref{Algorithm4} 
describes step by step how to compute the set of codewords of minimal support of a linear code. Note that Step 2 of Algorithm \ref{Algorithm4} can be executed by applying Algorithm \ref{Algorithm2}. Moreover, Algorithm \ref{Algorithm4} performs an incremental technique thus we can stop before the end, obtaining a partial result as for example a minimal codeword (with weight the minimum distance of the code).

\begin{algorithm2e}[!h]
\KwData{An $[n,k]$ linear code $\mathcal C$ defined over $\mathbb F_q$.}
\KwResult{The set of codewords of minimal support of $\mathcal C$, $\mathcal M_{\mathcal C}$}

Choose any term order $\prec$ on $\mathbb K[\mathbf X, \mathbf Z]$\;
Compute a reduced Gr\"obner basis of $I_{3}$ (defined in Theorem \ref{Theorem11}) w.r.t.~$\prec$\;
\tcp{Recall that $I_3 = I_{\Lambda (\mathcal C)}$, i.e. the Lawrence lifting ideal of $I_+(\mathcal C)$. In other words, if we compute a reduced Gr\"obner basis of $I_3$ we are obtain the Graver basis of $I_+(\mathcal C)$}
Substitute the variable $\mathbf Z$ by $\mathbf 1$\;
Get rid of the leftover codewords using Algorithm \ref{Algorithm6}.

\caption{Algorithm for computing $\mathcal M_{\mathcal C}$}
\label{Algorithm4}
\end{algorithm2e}

In the following example we will see how to use Algorithm \ref{Algorithm4} to obtain the set of codewords of minimal support of a linear code.
\begin{example}
\label{Example4}
Consider $\mathcal C$ the $[6,3]$ ternary code with generator matrix 
$$G_{\mathcal C}=\left(\begin{array}{cccccc}
1 & 0 & 0 & 2 & 2 & 0 \\
0 & 1 & 0 & 1 & 1 & 0 \\
0 & 0 & 1 & 1 & 2 & 1
\end{array}\right)\in \mathbb F_3^{3\times 6}.$$
This code has $3^3=27$ codewords.
\begin{itemize}
\item The zero codeword.
\item $16$ codewords of minimal support. It is easy to check that if a codeword $\mathbf c$ is a minimal support codeword, then all its multiples are also codewords of minimal support. So these $16$ codewords represent $8$ different supports.
$$\begin{array}{ccccccc}
1. & (1,0,0,2,2,0) & (2,0,0,1,1,0) & & 5. & (1,0,1,0,1,1) & (2,0,2,0,2,2)\\
2. & (0,1,0,1,1,0) & (0,2,0,2,2,0) & & 6. & (2,0,1,2,0,1) & (1,0,2,1,0,2)\\
3. & (1,1,0,0,0,0) &(2,2,0,0,0,0)  & & 7. & (0,1,1,2,0,1) & (0,2,2,1,0,2)\\
4. & (0,0,1,1,2,1) & (0,0,2,2,1,2) & & 8. & (0,2,1,0,1,1) & (0,1,2,0,2,2)\\
\end{array}$$
\item Another 10 codewords which do not have minimal support.
$$\begin{array}{ccccc}
(2,1,0,2,2,0) & (1,2,0,1,1,0) & &
(2,1,1,0,1,1) & (1,2,2,0,2,2)\\
(1,2,1,2,0,1) & (2,1,2,1,0,2) & & &\\
(2,2,1,1,2,1) & (1,1,2,2,1,2) & &
(1,1,1,1,2,1) & (2,2,2,2,1,2)
\end{array}$$
\end{itemize}
Let $\alpha=2$ be a primitive element of $\mathbb F_3$ and let us label the rows of $G$ by $\mathbf w_1$, $\mathbf w_2$ and $\mathbf w_3$. By Theorem \ref{Theorem1}, the ideal associated to $\mathcal C$ may be defined as the following ideal:
\begin{eqnarray*}
\left\langle \begin{array}{ccc}
\left\{\mathbf X^{\Delta (\alpha^j \mathbf w_i)}-1\right\}_{\substack{i=1, \ldots, 3\\ j=1, 2}}
&\bigcup&
\left\{ \mathcal R_{X_i} \right\}_{i=1, \ldots, 6}
\end{array}
\right\rangle,
\end{eqnarray*}
where $\mathcal R_{X_i}$ consists of the following binomials
$$\mathcal R_{X_i}=\left\{\begin{array}{ccc}
x_{i,1}^2-x_{i,2},  & x_{i,1}x_{i,2}-1, & x_{i,2}^2-x_{i,1}
\end{array}\right\}~\hbox{ with } i=1,\ldots, 6.$$

If we compute a Gr\"obner basis of $I_+(\mathcal C)$ w.r.t.~a {\tt degrev} ordering we get $41$ binomials representing the following set of $10$ codewords:
$$\begin{array}{ccccc}
(0, 2, 2, 1, 0, 2) & (0, 1, 1, 2, 0, 1) & & (0, 1, 2, 0, 2, 2) & (0, 2, 1, 0, 1, 1)\\ 
(0, 2, 0, 2, 2, 0) & (0, 1, 0, 1, 1, 0) & & (0, 0, 2, 2, 1, 2) & (0, 0, 1, 1, 2, 1)\\ 
(2, 2, 0, 0, 0, 0) & (1, 1, 0, 0, 0, 0) & &  & \\
\end{array}$$

From those $10$ codewords we can remove vectors which are scalar multiples of another in the set, obtaining the following \emph{minimal test-set}:
$$(1,1,0,0,0,0), (0,0,1,1,2,1), (0,1,0,1,1,0), (0,1,2,0,2,2), (0,1,1,2,0,1)$$
Again if we compare with Complete Syndrome Decoding (CSD) the cost storage of GDDA is much smaller. Indeed, for CSD we need to store $\frac{3^{n-k}-1}{2} = 13$ coset leaders.

Note that all nonzero codewords are codewords of minimal support but not all codewords of minimal support are represented in the above set.

Traditionally, if we compute a Graver basis of  $I_+(\mathcal C)$ we obtain $4212$ binomials (following the techniques of \cite{sturmfels:1996}). However, Algorithm \ref{Algorithm4} returns directly the following set of codewords:

$$\begin{array}{ccccc}
(2, 1, 2, 1, 0, 2) & (1, 2, 1, 2, 0, 1) & & (1, 2, 2, 0, 2, 2) & (2, 1, 1, 0, 1, 1)\\
(1, 0, 2, 1, 0, 2) & (2, 0, 1, 2, 0, 1) & & (2, 0, 2, 0, 2, 2) & (1, 0, 1, 0, 1, 1)\\
(0, 2, 2, 1, 0, 2) & (0, 1, 1, 2, 0, 1) & & (0, 1, 2, 0, 2, 2) & (0, 2, 1, 0, 1, 1)\\
(0, 0, 2, 2, 1, 2) & (0, 0, 1, 1, 2, 1) & & (2, 1, 0, 2, 2, 0) & (1, 2, 0, 1, 1, 0)\\
(2, 0, 0, 1, 1, 0) & (1, 0, 0, 2, 2, 0) & &(0, 1, 0, 1, 1, 0) & (0, 2, 0, 2, 2, 0)\\
(1, 1, 0, 0, 0, 0) & (2, 2, 0, 0, 0, 0) & & (0, 0, 0, 0, 0, 0) &\\
\end{array}$$
Observe that the set $\mathcal M_{\mathcal C}$ is contained in the previous set. 

\end{example}

\begin{conjecture}
Example \ref{Example4} is just a toy example, but the difference between exhaustive search in the whole set of codewords and the set of codewords resulting from Algorithm \ref{Algorithm4} will be higher if $\mathcal C$ is chosen among a class of codes with a strong algebraic structure as for example: cyclic codes, Generalized Reed-Solomon codes ... 
\end{conjecture}

\section{Applications to other types of codes}
\label{Section5}

We will show that the results presented on this article could be generalized to other classes of codes such as modular codes, codes defined over multiple alphabets or additive codes. Modular codes were already discussed in \cite{marquez:2011} but this new approach allows the computation of a test-set for decoding.

Other metrics could be more useful when dealing with group codes  or codes over rings such as the Lee norm and G norm (see \cite{aliasgari:2013}) since they give us (via the Gray map isometry) nice descriptions of non-linear binary codes. 

\subsection{Modular codes}
In \cite{marquez:2011}
the authors were devoted to the study of modular codes $\mathcal C$ defined over the ring $\mathbb Z_s$. In other words, submodules of $\left( \mathbb Z_s^n,+\right)$. The important point in that article was the fact that a Graver basis of the lattice ideal associated with a modular code provides the set of codewords of minimal support of the code. Recall that the reduced Gr\"obner basis of the lattice ideal (defined as in \cite{marquez:2011})
does not allow decoding, see Example \ref{Example4}.

However, we can adapt the ideas presented above for linear codes to modular codes.
We will use the following characteristic crossing functions.

$$\begin{array}{ccc}
\begin{array}{cccc}\Delta_s: & \mathbb Z_s & \longrightarrow & E_s \cup \{ \mathbf 0\}\subseteq \mathbb Z^{s-1}\end{array} 
& \hbox{ and }&
\begin{array}{cccc}\nabla_s: & E_s \cup \{\mathbf 0\} & \longrightarrow & \mathbb Z_s\end{array}
\end{array}$$

These applications aim at describing a one-to-one correspondence between the ring $\mathbb Z_s$ and the standard basis of $\mathbb Z^{s-1}$, denoted as $E_s = \left\{ \mathbf e_1, \ldots, \mathbf e_{s-1}\right\}$ where $\mathbf e_i$ denotes the unit vector with a $1$ in the $i$-th coordinate and $0$'s elsewhere.

\begin{enumerate}
\item The map $\Delta_s$ replaces the element $i \in \mathbb Z_s$ by the vector $\mathbf e_i$ and $0\in \mathbb Z_s$ by the zero vector $\mathbf 0 \in \mathbb Z^{s-1}$.

\item The map $\nabla$ recovers the element $j\in \mathbb Z_s$ from the unit vector $\mathbf e_j$ and the zero element $0\in \mathbb Z_s$ from the zero vector $\mathbf 0\in \mathbb Z^{s-1}$.
\end{enumerate}

Now let $\mathbf X$ denote $n$ vector variables $X_1, \ldots, X_n$ such that each variable $X_i$ can be decomposed into $s-1$ components $x_{i,1}, \ldots, x_{i,s-1}$ with $i=1, \ldots, n$, representing the nonzero elements of $\mathbb Z_s$. 

\begin{remark}
Note that the degree of a monomial of type $\mathbf X^{\Delta_s \mathbf a}$ with $\mathbf a \in \mathbb Z_s^n$ is defined as the weight of the vector $\mathbf a$. 
\end{remark}

Given the rows of a generator matrix of the modular code $\mathcal C$, labelled by $\mathbf w_1, \ldots, \mathbf w_k$ in $\mathbb Z_s^{n}$, we define the ideal associated to $\mathcal C$ as the binomial ideal

$$I_+(\mathcal C)= \left\langle \begin{array}{ccc}
\left\{ \mathbf X^{\Delta_s\mathbf w_i}-1\right\}_{i=1, \ldots, k} & \bigcup &
\left\{ \mathcal R_{X_i}\right\}_{i=1, \ldots, n}
\end{array}\right\rangle,$$
where $\mathcal R_{X_i}$ consists of all the binomials on the variable $X_i$ associated to the relations given by the additive table of the ring $\mathbb Z_s$, i.e. 
$$\mathcal R_{X_i} = \left\{ \begin{array}{c}
\left\{ x_{i,u}x_{i,v}-x_{i,w}\mid u+v \equiv w \mod s\right\}\\
\left\{ x_{i,u}x_{i,v}-1 \mid u+v \equiv 0 \mod s\right\}
\end{array}\right\} \hbox{ with } i =1, \ldots, n.$$

\begin{remark}
\label{Remark8}
Note that the main difference of the set of generators describing the ideal associated with a modular code, respect to the set of generators of the ideal related with a $\mathbb F_q$-linear code, is its cardinality. That is, for linear codes we need to add all the multiples in $\mathbb F_q$ of each row $\mathbf w_i$, while for modular codes this is not necessary.
Moreover, the previous result can be extended for codes over $\mathbb F_p$ with $p$ prime since $\mathbb F_p \cong \mathbb Z_p$.
\end{remark}

Taking into account the new definition of the ideal associated to a modular code we can apply all the results of this article to these types of codes. Therefore, now we are not only able to compute the set of codewords of minimal support of modular codes but also we provide a complete decoding algorithm for these codes.

\subsection{Multiple Alphabets}
Let $\mathcal C$ be a submodule of dimension $k$ over the multiple alphabets $\mathbb Z_{s_1}\times \cdots \times \mathbb Z_{s_n}$.  For simplicity of notation we write $\left\{ \mathbf e_{1}^s, \ldots, \mathbf e_{s-1}^s\right\}$ for the canonical basis of $\mathbb Z^{s-1}$. 

Let $\mathbf X$ stand for $n$ vector variables $X_1, \ldots, X_n$ such that each variable $X_i$ can be decomposed into $s_i -1$ components $x_{i,1}, \ldots, x_{i,s_i-1}$ with $i=1, \ldots, n$ representing the non zero element of $\mathbb Z_{s_i}$. Let $\mathbf a = \left( a_1, \ldots, a_n\right)\in \mathbb Z_{s_1}\times \cdots \times \mathbb Z_{s_n}$. We will adopt the following notation:
$$\mathbf X^{\Delta \mathbf a} = X_1^{\Delta_{s_1}a_1}\cdots X_n^{\Delta_{s_n}a_n} = \left(x_{1,1}\cdots x_{1,s_1-1} \right)^{\Delta_{s_1} a_1}\cdots \left(x_{n,1}\cdots x_{n,s_n-1} \right)^{\Delta_{s_n} a_n}.$$
Similar to the modular case, given the rows of a generator matrix of $\mathcal C$, labelled by $\mathbf w_1, \ldots, \mathbf w_k$, we may define the ideal associated to $\mathcal C$ as the following binomial ideal:

$$I_+(\mathcal C) = \left\langle \begin{array}{ccc}
\left\{ \mathbf X^{\Delta\mathbf w_i}-1\right\}_{i=1, \ldots, k} & \bigcup &
\left\{ \mathcal R_{X_i}\right\}_{i=1, \ldots, n}
\end{array}\right\rangle.$$

\begin{remark}
The main difference with the modular case is that the relations $\mathcal R_{X_i}$ could be different for each $i\in \left\{ 1, \ldots, n\right\}$.
\end{remark}

With this new definition, all the results of this article are valid for these types of codes.

\subsection{Additive codes}
Let $\mathbb F_{q_1}$ be an algebraic extension of $\mathbb F_{q_2}$, i.e. $q_1 = p^{r_1}$ and $q_2 = p^{r_2}$ where $p$ is a prime number and $r_2$ divides $r_1$. An $\mathbb F_{q_2}$-additive code $\mathcal C$ of parameters $[n,k]$ over $\mathbb F_{q_1}$ is an $\mathbb F_{q_2}$-linear subspace of $\mathbb F_{q_1}^n$.

In other words, given the rows of a generator matrix of $\mathcal C$ labelled by $\mathbf w_1, \ldots, \mathbf w_k\in \mathbb F_{q_1}^n$, the set of codewords of $\mathcal C$ may be defined as:
$$\left\{ \alpha_1 \mathbf w_1 + \cdots + \alpha_k \mathbf w_k \mid \alpha_i \in \mathbb F_{q_2} \hbox{ for } i =1, \ldots, k\right\}.$$

Let $\alpha$ be a primitive element of $\mathbb F_{q_2}$. We check at once that the binomial ideal associated to $\mathcal C$ is defined by the following binomial ideal
$$I_+(\mathcal C) = \left\langle 
\begin{array}{ccc}
\left\{ \mathbf X^{\Delta\alpha^j \mathbf w_i}-1 \right\}_{\substack{i=1, \ldots, k\\ j=1, \ldots, q_2-1}} & \bigcup &
\left\{ \mathcal R_{X_i}\right\}_{i=1, \ldots, n}
\end{array}\right\rangle,$$
where $\mathcal R_{X_i}$ consist of all the binomials on the variable $X_i$ associated to the relations given by the additive table of the field $\mathbb F_{q_1}$.
Of course, the results obtained for $\mathbb F_q$-linear codes could be adapted to additive codes.

\section*{Conclusions}

Complete decoding for an arbitrary linear code is proved to be NP-hard. That is, from a computational point of view, our description could not provide a polynomial time algorithm. However, we present a new complete decoding algorithm using the concept of Gr\"obner basis. This proposal was already presented for the binary case before but the generalization to the non-binary case was not possible with the previous approach.

It is outside the scope of this article but we are hopeful to achieve efficient methods using
this approach for special types of codes like cyclic codes or some subclasses of cyclic codes
such as Reed-Solomon codes and BCH codes since these codes have a rich algebraic structure
and we can take advantage of existing efficient method to solve polynomial systems whose
equations are left invariant by the action of a finite group.

We would like to notice that during the (\href{http://www.google-melange.com/gsoc/homepage/google/gsoc2013}{Google Summer of code of $2013$})  the student Ver\'onica Suaste (CIMAT, M\'exico) implemented Algorithm \ref{Algorithm2} and also a decoding algorithm using a minimal test-set for inclusion in Sage.  The code is published at \href{http://trac.sagemath.org/ticket/14973}{http://trac.sagemath.org/ticket/14973} and it will be included in next releases of Sage. Note that in the project conclusions, there are some examples in which the new decoding algorithm is faster than the classical syndrome decoding of Sage.

%

\bibliographystyle{plain}
\bibliography{AMC-Article-Final}

\begin{thebibliography}{10}

\bibitem{aliasgari:2013}
M.~Aliasgari, M.R. Sadeghi, and D.~Panario.
\newblock {Gr\"obner Bases for Lattices and an Algebraic Decoding Algorithm}.
\newblock {\em IEEE Transaction on Communications}, 61(4):1222--1230, 2013.

\bibitem{Ashikhmin:1998}
A.~Ashikhmin and A.~Barg.
\newblock Minimal vectors in linear codes.
\newblock {\em IEEE Trans. Inform. Theory}, 44(5):2010--2017, 1998.

\bibitem{barg:1998}
A.~Barg.
\newblock Complexity issues in coding theory.
\newblock In {\em Handbook of coding theory, {V}ol. {I}, {II}}, pages 649--754.
  North-Holland, Amsterdam, 1998.

\bibitem{berlekamp:1978}
E.~R. Berlekamp, R.~J. McEliece, and Henk C.~A. Van~Tilborg.
\newblock On the inherent intractability of certain coding problems.
\newblock {\em IEEE Trans. Inform. Theory}, IT-24(3):384--386, 1978.

\bibitem{borges:2008}
M.~Borges-Quintana, M.~A. Borges-Trenard, P.~Fitzpatrick, and
  E.~Mart\'inez-Moro.
\newblock Gr\"obner bases and combinatorics for binary codes.
\newblock {\em Appl. Algebra Engrg. Comm. Comput.}, 19(5):393--411, 2008.

\bibitem{borges:2011}
M.~Borges-Quintana, M.A. Borges-Trenard, I.~M\'arquez-Corbella, and
  E.~Mart\'inez-Moro.
\newblock An algebraic view to gradient descent decoding.
\newblock In {\em IEEE Information Theory Workshop (ITW)}, pages 1 --4, 30
  2010-sept. 3 2010.

\bibitem{bruck:1990}
J.~Bruck and M.~Naor.
\newblock The hardness of decoding linear codes with preprocessing.
\newblock {\em IEEE Trans. Inform. Theory}, 36(2):381--385, 1990.

\bibitem{cox:2007}
D.A. Cox, J.~Little, and D.~O'Shea.
\newblock {\em Ideals, Varieties, and Algorithms: An Introduction to
  Computational Algebraic Geometry and Commutative Algebra}.
\newblock Number v. 10 in Undergraduate Texts in Mathematics. Springer, 2007.

\bibitem{biase:1995}
F.~Di~Biase and R.~Urbanke.
\newblock An {A}lgorithm to {C}alculate the {K}ernel of {C}ertain {P}olynomial
  {R}ing {H}omomorphisms.
\newblock {\em Experimental Mathematics}, 4(3):227--234, 1995.

\bibitem{eisenbud:1996}
D.~Eisenbud and B.~Sturmfels.
\newblock Binomial ideals.
\newblock {\em Duke Mathematical Journal}, 84(1):1--45, 1996.

\bibitem{faugere:1993}
J.~C. Faug{\`e}re, P.~Gianni, D.~Lazard, and T.~Mora.
\newblock Efficient computation of zero-dimensional {G}r\"obner bases by change
  of ordering.
\newblock {\em J. Symbolic Comput.}, 16(4):329--344, 1993.

\bibitem{fitzpatrick:1997}
P.~Fitzpatrick.
\newblock Solving a multivariable congruence by change of term order.
\newblock {\em J. Symbolic Comput.}, 24(5):575--589, 1997.

\bibitem{fitzpatrick:1992}
P.~Fitzpatrick and J.~Flynn.
\newblock A {G}r\"obner basis technique for {P}ad\'e approximation.
\newblock {\em J. Symbolic Comput.}, 13(2):133--138, 1992.

\bibitem{ikegami:2003}
D.~Ikegami and Y.~Kaji.
\newblock Maximum {L}ikelihood {D}ecoding for {L}inear {B}lock {C}odes using
  {G}robner {B}ases.
\newblock {\em IEICE Trans. Fund. Electron. Commun. Comput. Sci.},
  E86-A(3):643--651, 2003.

\bibitem{liebler:2009}
R.~A. Liebler.
\newblock Implementing gradient descent decoding.
\newblock {\em Michigan Math. J.}, 58(1):285--291, 2009.

\bibitem{marquez:2011}
I.~M\'arquez-Corbella and E.~Mart\'inez-Moro.
\newblock Algebraic structure of the minimal support codewords set of some
  linear codes.
\newblock {\em Adv. Math. Commun.}, 5(2):233--244, 2011.

\bibitem{marquez:2012}
I.~M\'arquez-Corbella and E.~Mart\'inez-Moro.
\newblock Decomposition of {M}odular {C}odes for {C}omputing {T}est {S}ets and
  {G}raver {B}asis.
\newblock {\em Mathematics in Computer Science}, 6:147--165, 2012.

\bibitem{prange:1961}
E.~Prange.
\newblock Step-by-step decoding in groups with weight function. part 1.
\newblock Air Force Cambridge Research Labs Hanscom AFB MA, 1961.

\bibitem{samuel:2008}
P.~Samuel.
\newblock {\em Algebraic Theory of Numbers: Translated from the French by Allan
  J. Silberger}.
\newblock Dover Books on Mathematics. Dover Publications, 2013.

\bibitem{sturmfels:1996}
B.~Sturmfels.
\newblock {\em Gr\"obner bases and convex polytopes}, volume~8 of {\em
  University Lecture Series}.
\newblock American Mathematical Society, Providence, RI, 1996.

\end{thebibliography}

\medskip
Received xxxx 20xx; revised xxxx 20xx.
\medskip

\end{document}